\documentclass[11pt]{article}
\usepackage{mathpazo}
\usepackage[T1]{fontenc}
\usepackage[letterpaper, margin=1in]{geometry}
\usepackage{amsmath,amssymb,amsfonts,amsthm}
\usepackage{mathtools}
\usepackage{bm}
\usepackage{bbm}
\usepackage[dvipsnames]{xcolor}
\usepackage{xspace}
\usepackage{microtype}
\usepackage{csquotes}
\usepackage{graphicx}
\usepackage{subcaption}
\usepackage{adjustbox}
\usepackage{float}
\usepackage{enumitem}
\setlist{nosep,topsep=0pt,leftmargin=*}
\usepackage{url}
\usepackage{alphabeta}

\usepackage{todonotes}
\usepackage{authblk}

\usepackage{algorithm}
\usepackage{algpseudocode}

\definecolor{myRed}{rgb}{0.82,0.13,0.56}
\definecolor{myBlue}{RGB}{13,55,174}
\usepackage{hyperref}
\hypersetup{
  colorlinks=true,
  citecolor=myRed,
  linkcolor=myBlue,
  urlcolor=PineGreen
}
\usepackage[noabbrev,capitalise]{cleveref}

\usepackage{thmtools}
\usepackage{thm-restate}

\usepackage{eqparbox}

\newcommand{\ceil}[1]{\left\lceil #1 \right\rceil}

\newcommand{\cost}{\mathrm{C}}

\newcommand{\OPT}{\textsc{Opt}}

\newcommand{\range}{\mathrm{range}}

\theoremstyle{plain}
\newtheorem{theorem}{Theorem}[section]
\newtheorem{proposition}[theorem]{Proposition}
\newtheorem{lemma}[theorem]{Lemma}

\newtheorem{corollary}[theorem]{Corollary}
\theoremstyle{definition}
\newtheorem{definition}[theorem]{Definition}

\newtheorem{remark}[theorem]{Remark}
\theoremstyle{remark}

\usepackage[numbers]{natbib}

\author[1,2]{Charalampos Platanos}
\author[1,2]{Thanos Tolias}

\affil[1]{National Technical University of Athens, Greece}
\affil[2]{Archimedes RU, Athena RC, Greece}

\affil[ ]{\texttt{{
\href{mailto:harrisplat@gmail.com}{harrisplat@gmail.com} \quad
\href{mailto:thanostolias@mail.ntua.gr}{thanostolias@mail.ntua.gr}}}}

\title{Online Sorting with Our Eyes Wide Shut}
\date{}

\begin{document}

\maketitle

\renewcommand\thefootnote{}\footnotetext{This work has been partially supported by project MIS 5154714 of the National Recovery and Resilience Plan Greece 2.0 funded by the European Union under the NextGenerationEU Program.}

\begin{abstract}
In \emph{Online Sorting}, we are given an array $A$ of $n$ initially empty cells. At each time step $t\in[n]$, an element $x_t\in[0,1]$ arrives and must be placed irrevocably into an empty cell, without knowledge of future arrivals. The objective is to minimize the sum of absolute differences between elements assigned to adjacent cells. The problem has been studied under both adversarial and stochastic input models. For adversarial sequences, Aamand, Abrahamsen, Beretta, and Kleist (SODA'23) gave a tight $O(\sqrt n)$-competitive algorithm, fully resolving the worst-case setting. For stochastic sequences, in which the elements are drawn i.i.d.\ from $U[0,1]$, the picture is settled: Hu (SODA'26) gave an $\log n\cdot 2^{O(\log^* n)}$-competitive algorithm in expectation and proved an $\Omega(\log n)$ lower bound, while Kalavas, Platanos, and Tolias (STACS'26) gave an $O(\log^2 n)$-competitive algorithm with high probability. Very recently, Hermansen (ESA'26) closed the remaining gap by designing an $O(\log n)$-competitive algorithm in expectation.

In this work, we study \emph{Random-Order Online Sorting}, a model interpolating between the adversarial and stochastic settings, that was posed as a challenging open question by Hermansen (ESA'26). Here, the input is a multiset chosen adversarially, but its elements arrive in uniformly random order. We take a different point of view by solving the problem in rank space, and prove an $O(\log^2 n)$-competitive algorithm with high probability, matching the state-of-the-art high probability guarantee for the stochastic setting in this more general model. A central challenge is that the algorithm must learn a useful rank partition from samples obtained online, while already committing arrivals irrevocably and without incurring excessive cost during this learning process.

We also study a multidimensional generalization, which we call \emph{Random-Order Online TSP}, and obtain an $O(\log^3 n)$-competitive algorithm with high probability. Our construction uses random prefixes to build near-optimal reference curves, which reduce each phase of the multidimensional problem to an instance of one-dimensional random-order online sorting.

\end{abstract}

\newpage
\section{Introduction}

In the \emph{Online Sorting} problem, we are given an array \(A\) consisting of \(n\) initially empty cells. At each time step \(t\in [n]\), an element \(x_t\in [0,1]\) arrives and must be placed irrevocably into one of the currently empty cells of the array. After all elements have been placed, the cost of the resulting array is the sum of the absolute differences between values in adjacent cells of the array. The objective is to minimize this cost.

The study of Online Sorting was initiated by~\citet*{abrSODA}, who introduced the problem as a tool for proving lower bounds in online packing problems. They studied the adversarial variant of the problem, in which the input sequence is chosen and revealed adversarially, and gave an \(O(\sqrt n)\)-competitive algorithm together with a matching lower bound for deterministic algorithms. This established a sharp picture for the one-dimensional problem in the worst case and showed that, despite the simplicity of the offline optimum, the online problem is inherently nontrivial.

Subsequent work broadened this picture by considering more structured input models.~\citet*{AbrESA} introduced the \emph{stochastic} variant, in which the input elements are drawn i.i.d.\ from the uniform distribution on \([0,1]\), and showed that this additional randomness can be exploited to obtain significantly better guarantees; in particular, they gave an \(\widetilde{O}(n^{1/4})\)-competitive algorithm. Hu~\cite{hu2025} obtained a \(\log n\cdot 2^{O(\log^* n)}\)-competitive algorithm together with an \(\Omega(\log n)\) lower bound, while~\citet*{KPT} gave an \(O(\log^2 n)\)-competitive algorithm with high probability. More recently, \citet{hermansen26} \footnote{We cite the author’s Master’s thesis, which contains the results of the paper accepted at ESA 2026, since the conference version does not appear to be publicly available; the paper’s acceptance is indicated on the conference website’s list of accepted papers.} provided an $O(\log n)$-competitive algorithm in expectation, closing the remaining gap. These results suggest that the difficulty of the problem is highly sensitive to the way the input is generated.

In this work, we consider a natural intermediate model, which we call
\emph{Random-Order Online Sorting}. Here, an adversary chooses the multiset
\(X\) of input values, but the elements are revealed in uniformly random order.
Random-order models are a standard way to interpolate between fully adversarial
and stochastic input assumptions: they preserve the worst-case structure of the
instance, while ruling out adversarial control over the order in which this
structure is exposed. In our setting, this distinction is particularly natural.
The algorithm receives no prior distributional information, but random prefixes
nevertheless provide increasingly representative information about the rank
structure of the input. Thus, the random-order model asks whether the strong
guarantees known in stochastic settings rely on independence and knowledge of
the input distribution, or whether random permutation alone already suffices.

Our main result shows that the latter is the case: we obtain an
\(O(\log^2 n)\)-competitive algorithm with high probability for
Random-Order Online Sorting, thus answering the main open questions of \cite{KPT, hermansen26}.

A related line of work considers multidimensional extensions of the problem.~\cite{AbrESA} introduced \emph{Online TSP}, in which the input consists of points in \([0,1]^d\) and the goal is to minimize the length of the resulting path through the assigned points. They gave an \(\widetilde{O}(\sqrt n)\)-competitive algorithm for this problem in fixed dimension. Shortly after, Bertram~\cite{bertESA} generalized this picture by giving an \(O(\sqrt n)\)-competitive algorithm for Online Metric TSP in arbitrary metric spaces, thereby in particular removing the dependence on the dimension in the Euclidean setting.

The authors in ~\cite{KPT} studied the stochastic variant of Online TSP, in which the points are drawn i.i.d.\ from the uniform distribution on the \(d\)-dimensional unit cube, and obtained polylogarithmic guarantees with high probability. Recently, perhaps surprisingly, \cite{hermansen26} designed $O(1)$ competitive algorithms for $d\geq 2$, suggesting that the TSP variant is easier than the single-dimensional. Their works left open whether such guarantees can be obtained more generally for arbitrary known input distributions. We answer this question in the affirmative. In this paper, we study the random-order model, which we call \emph{Random-Order Online TSP}, and obtain an \(O(\log^3 n)\)-competitive algorithm with high probability. Since our guarantee holds for every adversarially chosen multiset and uses only the random arrival order, it immediately implies polylogarithmic guarantees for inputs drawn i.i.d.\ from any, even unknown, distribution \(D\). Note that one cannot hope for a $o(\log n)$-competitive algorithm in our Random-Order Online TSP (or even the known but adversarially selected distribution model), since the single-dimensional lower bounds are inherited. 

\subsection{Technical Overview and Contributions}

We begin with our result for Random-Order Online Sorting.

\begin{theorem}
There exists an \(O(\log^2 n)\)-competitive algorithm with high probability for Random-Order Online Sorting.
\end{theorem}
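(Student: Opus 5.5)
Let $X$ be the adversarial multiset with sorted values $y_1\le\dots\le y_n$. The offline optimum is $\OPT=y_n-y_1$. Our plan is to work in \emph{rank space}: we reduce the problem to a stochastic-like instance in which each arrival carries a rank that is essentially uniform among the ranks not yet seen. We then adapt the $O(\log^2 n)$ high-probability scheme of \citet{KPT} for $U[0,1]$ inputs. The one change is that every interval partition of $[0,1]$ used there is replaced by a partition of the value line into \emph{rank buckets}, meaning intervals of consecutive order statistics of $X$, each containing a prescribed number of elements.

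The cost is then charged to value gaps. Within a bucket whose values span $[a,b]$, any placement of its elements into a contiguous block costs at most a constant times the number of monotonicity switches times $(b-a)$. Hence a block that is internally sorted up to $k$ switches costs $O(k(b-a))$. Since the buckets at each level of the hierarchy partition $[y_1,y_n]$ disjointly, summing over a level gives $O(k\cdot\OPT)$.

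The algorithm runs in $O(\log n)$ phases. Phase $j$ processes arrivals $t\in(n/2^{j},n/2^{j-1}]$ (or a doubling schedule from the front, with the first $O(\log n)$ elements placed arbitrarily at cost $O(\log n\cdot \OPT)$).

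\textbf{Learning step.} At the start of a phase, the algorithm takes the random prefix seen so far, of size $m$, and uses its empirical quantiles to define bucket boundaries. By a standard random-permutation concentration bound (hypergeometric Chernoff plus a union bound over $O(n^2)$ rank intervals), w.h.p.\ every interval of ranks containing $\Omega(\frac{n}{m}\log n)$ elements of $X$ contains $\Theta(\frac{m}{n}|I|)$ prefix elements. So the empirical buckets have true sizes within constant factors of their targets, and future arrivals fall into each bucket at a rate proportional to its size, again w.h.p.

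\textbf{Placement step.} Given the learned buckets, each phase reserves array segments for buckets, with sizes matching the predicted counts times a constant slack. The KPT recursive scheme then places the elements of each bucket, with $O(\log n)$ levels of refinement. Elements that overflow their reserved segments are sent to a shared fallback region. That region is handled by a recursive instance of the same algorithm on the remaining rank space, whose size shrinks geometrically.

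We expect the main obstacle to be the interaction between learning and committing. Elements arriving during the earliest phases must be placed before the partition is accurate. Moreover, the partition is re-learned each phase, so segments reserved in different phases interleave in the array and their boundaries create extra adjacencies. We would handle this by making each phase's reserved region a contiguous block of the array. The boundaries between phases then contribute at most $O(\OPT)$ each, for $O(\log n\cdot\OPT)$ in total.

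We would also ensure that within a phase the charge is $O(\log n)\cdot\OPT$, by charging each hierarchy level $O(\OPT)$ via the disjoint-span argument above. This gives $O(\log n)$ phases times $O(\log n)$ per phase, i.e.\ $O(\log^2 n)\cdot\OPT$.

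Finally we must deal with heavy duplicate values and with buckets too small for concentration. For duplicates, ranks tie-break arbitrarily and identical values cost nothing. For buckets of size $<C\log n$, we argue that they incur cost at most their span times $O(\log n)$, which is still consistent with the total bound.
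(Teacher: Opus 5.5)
Your outline has a genuine gap at the point the paper treats as the crux: the array has no slack, so every reserved cell must eventually be filled, and your proposal never says what fills the cells your phases leave empty.

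\textbf{The hole-filling gap.} Your learning step gives only constant-factor accuracy, and you reserve ``predicted counts times a constant slack'' per bucket. A phase's reservations then exceed its arrivals, so the per-phase blocks cannot all be contiguous and exactly filled. A constant fraction of each segment stays empty and must later be filled by whatever arrives. A far-off value dropped into a bucket's hole can cost $\Omega(\OPT)$, and there are $\Theta(n/\log^2 n)$ buckets.

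The overflow mechanism does not repair this. Which elements overflow, and when, depends on the arrival order. The overflow stream is therefore not a uniformly random ordering of a fixed multiset, so ``a recursive instance of the same algorithm'' inherits no guarantee, and its cost is never counted.

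The paper needs three ingredients you are missing:
\begin{enumerate}
\item[(i)] Concentration far sharper than constant factors. An empirical rank interval with expected window load $\mu=\Theta(\log^2 n)$ receives $\mu\pm O(\sqrt{\mu\log n})=\mu\pm O(\log^{3/2}n)$ arrivals. This is proved via the uniform weak-composition (exchangeable gap) representation, which also covers windows not adjacent to the prefix.
\item[(ii)] A lower-order slack $\Delta=\eta\tau_K=\Theta(\log^{3/2}n)$ per bucket, together with the charging rule. The empty cells of an old bucket become residual cells, eligible only for a new interval contained in the old one. New bucket sizes are set by balanced extension. Earliest-bucket placement then guarantees the residual cells are filled within the next window.
\item[(iii)] The middle and descent phases and the specific time grid. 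These make the total allocation exactly $n$ and maintain the invariant: the previous phase has between $1$ and $O(\log^{3/2}n)$ empty cells per bucket, and the phase before it is full.
\end{enumerate}

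\textbf{The cost-accounting gap.} You charge $O(\OPT)$ per hierarchy level using ``a block sorted up to $k$ switches costs $O(k(b-a))$.'' Nothing in your scheme keeps $k=O(1)$ online. The only in-bucket guarantee available is $O(\sqrt{s})$ times the bucket's span, from \textsc{AdversarialPlacement}, and concentration forces $s=\Omega(\log^2 n)$. So each level costs $\Theta(\log n)\cdot\OPT$.

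Nesting an $O(\log n)$-level KPT-style scheme inside each of $O(\log n)$ learning phases therefore gives $O(\log^3 n)$, not $O(\log^2 n)$. In the paper there is a single level of buckets per phase, and the learning phases themselves are the $O(\log n)$ levels. The bound is $O(\log n)$ phases times $O(\sqrt{\log^2 n})=O(\log n)$ in-bucket cost per phase, plus $O(\OPT)$ boundary cost per phase.
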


To understand the algorithm, it is helpful to first consider an idealized setting. Suppose that the array could be partitioned into many small contiguous blocks of empty cells, which we call buckets, and that the value space could simultaneously be partitioned into intervals so that each bucket receives exactly the points belonging to its corresponding interval. If the buckets appear in the same left-to-right order as the intervals, then we may handle each bucket independently using the adversarial online placement routine of~\cite{abrSODA}. Since every bucket only sees points from a short interval, the in-bucket cost is small, and since neighboring buckets correspond to neighboring intervals, the total boundary cost is also small. In other words, once such a partition is available, the problem essentially decomposes into many local one-dimensional subproblems.

This ideal picture leaves us with two difficulties. First, even if we are given
a prefix of the input, how should we use it to construct an informative
partition of the instance? Second, how can we obtain such a prefix online
without already incurring a large cost before the partition has been learned?
Existing stochastic algorithms do not directly answer these questions. In the
i.i.d. setting, the algorithm may exploit distributional information, or the fact
that the geometry of future arrivals is governed by the same distribution as the
past. In the random-order model there is no underlying distribution known to the
algorithm, and the observed prefix and the future suffix are negatively
correlated samples from an adversarially chosen multiset.

Our answer is to view the problem through the lens of rank space. Rather than
trying to partition the geometric interval \([0,1]\) directly, we use the
observed prefix to build an empirical quantile partition. That is, we sort the prefix and cut it into blocks of equal rank size, thereby obtaining intervals that each contain roughly the same number of sampled points. This point of view is robust to the actual geometry of the instance: what matters is not where the values lie numerically, but how future arrivals are distributed relative to the order statistics of the prefix. More importantly, rank space turns the load of a bucket into a probabilistic question that the random-order model lets us analyze cleanly. The key probabilistic insight is easiest to see in rank space. Fix a prefix of the input and consider some later collection of arrivals. If we condition on the set of points appearing in the prefix and in this later portion of the stream, then their relative order can be represented by a uniformly random binary string, indicating which ranks belong to the prefix and which to the later arrivals. Consequently, the number of future points falling between two sampled ranks is governed by the number of zeros appearing between the corresponding ones. This allows us to show that intervals defined by empirical quantiles of the observed prefix receive nearly the expected number of future arrivals. We then assign a contiguous bucket of array cells to each such interval and place each arriving element into the earliest eligible bucket with remaining capacity. The small imbalances that remain are absorbed by the subsequent evolution of the algorithm.

The second question is more subtle, because in the random-order model the partition is not known in advance and must be learned from the same stream on which the algorithm is already acting. Our main algorithmic contribution is an \emph{ascent period} that gathers this information online. During the ascent, the algorithm repeatedly pauses at geometrically chosen update times, looks at the prefix revealed so far, and refines its empirical quantile partition accordingly. Each refinement creates a new phase with a finer collection of intervals and buckets. Intuitively, the ascent allows the algorithm to bootstrap itself: early coarse phases absorb the cost of learning, while later phases exploit the increasingly accurate rank-space information extracted from larger prefixes.

The ascent alone is not enough, because the empirical intervals obtained from a prefix are only approximately balanced, and so each new phase may inherit a small load discrepancy. To absorb these discrepancies, the algorithm continues with a middle phase at the finest scale and then a sequence of descent phases that gradually coarsen the partition again. These later phases provide the slack needed to stabilize the process while keeping all bucket sizes polylogarithmic.

The analysis is organized around a simple invariant: at every update, the buckets of the immediately preceding phase are almost full, while the buckets of the phase before that are completely full. Once this invariant is established, the remainder of the argument is deterministic. It guarantees that every new phase can be constructed, that every arriving element always finds an eligible bucket with available space, and that the total cost can be bounded by summing the in-bucket and boundary contributions over all phases. Since every bucket has size $O(\log^2 n)$, this yields the desired $O(\log^2 n)$ competitive ratio\footnote{By adjusting the constants and refining the analysis so that buckets have size $O(\log n)$, one can improve the competitive ratio to $O(\log^{3/2} n)$. We use the present formulation for clarity (see also Remark 13 in \cite{KPT} and Section 2.3.3 in \cite{hermansen26}).}.

We next turn to the multidimensional extension.

\begin{theorem}
There exists an \(O(\log^3 n)\)-competitive algorithm with high probability for Random-Order Online TSP.
\end{theorem}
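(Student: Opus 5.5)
We plan to reduce Random-Order Online TSP to the one-dimensional Random-Order Online Sorting result (the preceding theorem), used as a black box phase by phase, through a \emph{reference curve}. The abstract indicates this: random prefixes build near-optimal reference curves, and each phase becomes an instance of one-dimensional random-order sorting.

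\emph{Phases and reference curves.} We fix geometric update times $t_j = 2^j$. At time $t_j$ we take the revealed prefix $P_j$ and compute a constant-factor approximate tour $T_j$ through it, for example via Christofides or a space-filling-curve heuristic. We parametrize $T_j$ by arc length; this is the reference curve. Each later arrival $x$ in phase $j$ (times in $(t_j,t_{j+1}]$) is projected to its nearest prefix point $\pi_j(x)\in P_j$. The arrival then receives the one-dimensional key $s_j(x)$, the position of $\pi_j(x)$ along $T_j$, with ties broken consistently. Within each phase the keys arrive in uniformly random order, conditional on the set of elements in the phase. This holds because the phase elements are a uniformly random subset of the remaining multiset, and the elements inside the phase are uniformly permuted. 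We assign each phase a contiguous block of array cells of size $t_{j+1}-t_j$ and run the Random-Order Online Sorting algorithm on the keys inside that block. The first $O(\log n)$ elements are placed arbitrarily, at cost $O(\log n)\cdot\mathrm{diam}\le O(\log n)\,\OPT$; this requires normalizing so that $\OPT$ is at least the diameter, which holds for a Hamiltonian path through the points.

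\emph{Cost accounting.} For consecutive cells $a,b$ in phase $j$ we use the triangle inequality:
\[
|a-b|\le \|a-\pi_j(a)\|+\|\pi_j(a)-\pi_j(b)\|+\|\pi_j(b)-b\|\le \|a-\pi_j(a)\|+|s_j(a)-s_j(b)|+\|\pi_j(b)-b\| .
\]
Summed over the phase, the middle terms form exactly the one-dimensional sorting cost of the keys. By the previous theorem this is $O(\log^2 n)$ times the sorted cost, which is at most $O(\mathrm{len}(T_j))=O(\OPT)$, since a tour on a subset costs at most $\OPT$ up to constants. The boundary terms sum to $2\sum_x d(x,P_j)$. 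Summing over $O(\log n)$ phases and adding the $O(\log n)$ phase-junction edges, each of length at most the diameter, gives $O(\log^3 n)\,\OPT$, provided each phase contributes $\sum_{x\in\text{phase } j} d(x,P_j)=O(\OPT)$ with high probability.

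\emph{Main obstacle.} The crux is bounding the nearest-neighbour distances of suffix points to a random prefix of comparable size in an adversarial point set. We would argue in rank/order space as in the one-dimensional analysis. We fix an optimal tour $\OPT$ on the whole multiset and cut it into consecutive arcs of $\Theta(\log n)$ points each. The prefix $P_j$ is a uniform random subset of density at least about $1/2$ relative to the elements seen up to $t_{j+1}$, and the phase covers only a constant fraction. So by the binary-string argument (a Chernoff bound for sampling without replacement), with high probability every arc containing $\Theta(\log n)$ elements of the first $t_{j+1}$ arrivals contains a prefix point. Each suffix point then lies within the length of its arc of some prefix point. Summing over disjoint arcs gives $O(\OPT)$, possibly up to an extra logarithmic factor from arcs holding few present points. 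We would absorb that factor by cutting arcs by the number of points present up to time $t_{j+1}$ rather than in the whole multiset. The remaining technical checks are that the keys, being discrete positions with duplicates, fit the multiset formulation of the previous theorem, and that a union bound over $O(\log n)$ phases preserves the high-probability guarantee.
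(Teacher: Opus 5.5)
Your architecture is the paper's: doubling phases, a short tour through the revealed prefix as the reference curve, one-dimensional keys from projecting onto it, the triangle-inequality split into a sorting term plus $2\sum_x d(x,P_j)$, and a random red/blue split of a short tour on $Z=P_j\cup F$ to control the distance term. Projecting to the nearest prefix point instead of the nearest curve point, and using an approximate tour, are harmless variations; the paper itself only uses $d(x,\Gamma)\le d(x,P)$. However, two steps fail as written.

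\textbf{The high-probability guarantee.} You cannot invoke the one-dimensional theorem as a black box. Its guarantee is with high probability in the size $m$ of the instance it runs on, so a phase of size $m=2^j$ fails with probability only $m^{-O(1)}$. Your phases start at size $O(\log n)$, and the same holds for any $\mathrm{polylog}\,n$ start. The union bound over phases is then dominated by the smallest phases and gives failure probability $(\log n)^{-O(1)}$, not $n^{-\Omega(1)}$. The missing ingredient is to tie the sorting algorithm's parameters to the ambient $n$ rather than to $m$: take $\eta=\beta/\sqrt{\log n}$ and final scale $\tau_K=\Theta(\log^2 n)$. Then every regularity event in every phase fails with probability $n^{-\Omega(1)}$, while buckets still have size $O(\log^2 n)$. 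This needs every phase to have size at least $D\log^2 n$. So the arbitrarily placed initial block must have size $\Theta(\log^2 n)$, not $O(\log n)$; its cost of $O(\log^2 n)\OPT$ is still affordable. This is exactly the paper's parameterized sorting theorem together with $m_0=C_0\log^2 n$.

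\textbf{The projection-distance bound.} Your per-phase bound $\sum_{x\in F}d(x,P_j)=O(\OPT)$ with high probability is false. Take $n$ a power of two, $n-k$ copies of a point $a$ and $k$ copies of a point $b$ with $\|a-b\|=1$, so $\OPT(X)=1$. In the last phase the prefix is a uniform half of $X$. With probability about $2^{-k}$ every copy of $b$ lands in the current block, giving $\sum_{x\in F}d(x,P_j)=k$. For $k=\lfloor\frac12\log_2 n\rfloor$ this happens with probability about $n^{-1/2}$, so no constant multiple of $\OPT$ can hold with probability $1-1/n$.

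What your arc argument actually proves is $O(\log n)\OPT$. An arc must contain $\Theta(\log n)$ points of $Z$ for its all-blue probability to survive the union bound, and each of those points pays up to the arc length. Cutting arcs by points present, i.e.\ using a tour on $Z$ of length at most $2\OPT$, correctly fixes the other issue you noticed, and it is what the paper does. It does not remove this logarithm.

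You do not need the stronger bound. $O(\log n)\OPT$ per phase sums to $O(\log^2 n)\OPT$ over the $O(\log n)$ phases, which the $O(\log^3 n)\OPT$ sorting term dominates.
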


The multidimensional case presents a different obstruction. In the uniform
stochastic setting, one can exploit the ambient geometry of the cube: random
prefixes are geometrically representative of the distribution from which future
points are drawn. This kind of argument is no longer available for an arbitrary
adversarial multiset. The input may concentrate on an unknown lower-dimensional
or highly irregular structure, and the algorithm has no prior distributional
description from which to build a reference ordering. Moving to higher dimensions also removes the canonical order available on the
line. Thus, before we can reuse our one-dimensional machinery, we first need to
extract from the revealed prefix a one-dimensional structure that is both
algorithmically meaningful and cheap relative to the offline optimum.

Our approach is to build such a structure dynamically from the input itself. We process the instance in doubling phases. At the beginning of each phase, we look at the prefix revealed so far and compute a shortest Hamiltonian path through those points. This path serves as a \emph{prefix curve}, giving a one-dimensional reference structure adapted to the geometry of the observed prefix. The key idea is then to project the points arriving in the current phase onto this curve and to use their positions along the curve as one-dimensional keys. In this way, each phase is converted into an instance of Random-Order Online Sorting.

The main issue is to show that this projection does not distort the objective too much. The first part is clean: since the prefix curve is a shortest Hamiltonian path through only a subset of the input, its length is at most \(\OPT\) for the full instance. The more delicate part is to bound the total distance of the current phase points from the prefix curve. Here the random-order assumption is crucial. Condition on the set of points revealed by the end of the current phase. Then the prefix is simply a uniformly random half of that set, and the current phase consists of the complementary half. If we examine an optimal tour of this combined set, the points of the current phase appear in random runs between prefix points. With high probability, no such run is too long. Consequently, every point in the current phase lies close to some prefix point along the tour, and summing over all runs shows that the total projection cost is only \(O(\log n)\cdot \OPT\).

Once this bound is established, the rest of the reduction is straightforward. Applying the one-dimensional theorem to the projected keys yields an \(O(\log^2 n)\) factor within each doubling phase, while summing over the \(O(\log n)\) phases incurs one additional logarithmic loss. This gives an \(O(\log^3 n)\)-competitive algorithm with high probability for Random-Order Online TSP.

\subsection{Further Related Work}

\subparagraph{Hashing.} The connection between stochastic Online Sorting and hashing has been apparent since the problem’s introduction. Indeed, a natural first attempt at solving stochastic Online Sorting is to apply \emph{linear probing}~\cite{Knuth}. In this approach, an arriving element is placed at the position corresponding to its value percentile; for the uniform distribution on $[0,1]$, an element $x$ is mapped to position $\lceil x\cdot n\rceil$. If each position were filled by exactly one element mapped to it, this strategy would be optimal.

However, with high probability, multiple elements are mapped to the same position. Linear probing resolves such collisions by placing the element in the first available cell to the left of its target position. As collisions accumulate, the connection becomes clearer; resolving collisions by displacing elements to neighboring positions increases the respective objectives in a comparable manner (namely, the probe complexity in hashing and the adjacency cost in Online Sorting).

While the literature on hashing is vast, two schemes are particularly relevant in this context: \emph{Filter Hashing}~\cite{fotakis2003} and \emph{Transactional Multi-Writer Cuckoo Hashing}~\cite{kuszmaul2016}. Both employ a multi-layered architecture in which successive layers absorb and correct load imbalances created by earlier ones, a mechanism that closely parallels the phased and adaptive structure underlying our algorithms.

\subparagraph{Online Sorting with Larger Arrays.}
The formation of large contiguous clusters of occupied cells is a central obstacle for Online Sorting algorithms that follow linear-probing–style placement rules. A natural variant of the problem investigates how disruptive such clusters remain when additional array capacity is available. In this setting, the array length $m$ exceeds the number of arrivals $n$, i.e., $m>n$, and the goal is to understand how the competitive ratio improves as a function of the slack $m/n$. This variant was introduced by ~\citet*{abrSODA}, who designed a deterministic $2^{\sqrt{\log n}\sqrt{\log\log n + \log(1/\varepsilon)}}$-competitive algorithm for $m=(1+\varepsilon)n$. They also proved a lower bound, showing that every deterministic algorithm with $m=\gamma n$ is at least $\frac{1}{\gamma}\cdot \Omega(\log n/\log\log n)$-competitive. Subsequently, in independent and concurrent work,~\citet*{azar2025} and ~\citet*{nirjhor2025} improved the upper bound, nearly resolving this variant. More recently,\citet*{azar26} extended this line of work to the multidimensional setting of Online Metric TSP. They showed that, with an array of size $m=(1+\varepsilon)n$, there is a deterministic $O(\log^3 n/\varepsilon)$-competitive algorithm. For the stochastic case, \citet{hermansen26} proved that, with an array of size $m=\ceil{(1+\varepsilon)n}$, there is an $O(1+\log(1/\varepsilon))$-competitive algorithm.

\section{Preliminaries}

\begin{definition}
For a nonempty multiset \(Y\subseteq[0,1]\), define $\range(Y)=\max Y-\min Y$.
\end{definition}

\begin{definition}
In \emph{Random-Order Online Sorting}, an adversary chooses a multiset $X=\{x_1,\ldots,x_n\}\subseteq[0,1]$, and the elements of \(X\) are revealed in uniformly random order. We are given an array \(A\) of \(n\) initially empty cells, and at each time step \(t\in[n]\), the arriving element must be placed irrevocably into an empty cell of \(A\). After all elements have been placed, the cost of the resulting array is
\[
    C(A)=\sum_{i=2}^n |A[i]-A[i-1]|.
\] 
The offline optimum for an instance \(X\) is $\OPT(X)=\range(X)$.
\end{definition}

\begin{definition}
In \emph{Random-Order Online TSP}, an adversary chooses a multiset $X=\{x_1,\ldots,x_n\}\subseteq [0,1]^d$, and the points of \(X\) are revealed in uniformly random order. As before, we are given an array \(A\) of \(n\) initially empty cells, and each arriving point must be assigned irrevocably to an empty cell. The cost of the resulting array
is
\[
    C(A)=\sum_{i=2}^n \|A[i]-A[i-1]\|,
\]
and we denote by \(\OPT(X)\) the length of a shortest Hamiltonian path through the points of \(X\).  
\end{definition}

Let \(\pi\) denote the uniformly random permutation determining the arrival order of \(X\), and let \(R\) denote the internal randomness of the algorithm. An online algorithm is \(c\)-competitive in expectation if, for every adversarially chosen multiset \(X\) of size \(n\), $
    \mathbb{E}_{\pi, R}[{\cost(A)}] \le c\OPT(X),$
where the expectation is taken over both the random arrival order and the
internal randomness of the algorithm. We say that an online algorithm is \(c\)-competitive with high probability if, for every adversarially chosen multiset \(X\) of size \(n\),
\[
    \Pr_{\pi,R}\!\left[
        \cost(A)\le c\OPT(X)
    \right]
    \ge 1-\frac{1}{n},
\]
where the probability is again taken over both the uniformly random arrival order and the internal randomness of the algorithm.

\section{Random-Order Online Sorting}\label{sec: Sorting}

To motivate our approach, we first recall a result of~\citet*{abrSODA}, which will serve as our low-level placement subroutine.

\begin{theorem}[\cite{abrSODA}]\label{thm:adversarial}
There is an absolute constant \(\xi>0\) and a deterministic online routine
\textsc{AdversarialPlacement} with the following guarantee. For every sequence
\(x_1,x_2,\ldots,x_k\), with \(x_i\in[0,1]\) for each \(i\), if all \(k\)
values lie in an interval \(I\subseteq[0,1]\), then
\textsc{AdversarialPlacement} can place them into any prescribed contiguous
block of \(k\) cells with total cost at most
\(
    \xi\sqrt{k}\,\range(I),
\)
where \(\range(I)=\max I-\min I\).
\end{theorem}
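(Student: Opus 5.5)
The plan is to prove the guarantee by a bucketing argument with a geometric recursion on the leftover cells. First normalize. Adjacent differences are translation-invariant and scale linearly, so I may assume \(I=[0,1]\) after an affine map. I also assume \(k\) is large enough, since for small \(k\) the trivial bound \(\cost\le (k-1)\range(I)\) suffices. Let \(T(k)\) denote the worst-case cost of the routine on \(k\) elements, measured in units of \(\range(I)\), for a set of \(k\) cells consisting of at most \(g\) maximal contiguous runs. The routine is only ever invoked with \(g=1\); the recursion below needs general \(g\).

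\textbf{One level of the routine.} Split \(I\) into \(m=\lceil\sqrt k\,\rceil\) subintervals of length \(1/m\), the \emph{labels}. Split the available cells into blocks of \(s=\lceil 2\sqrt k\,\rceil\) consecutive cells, read in array order, together with at most \(s\) leftover cells. When \(x_t\) arrives with label \(j\), do the following.
\begin{itemize}
\item If some opened block with label \(j\) is non-full, write \(x_t\) into its leftmost empty cell.
\item Otherwise open a fresh block, give it label \(j\), and write \(x_t\) into its first cell.
\end{itemize}
This continues until an element needs a fresh block and none exists.

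\textbf{Invariant and cost of this level.} At every moment there is at most one non-full block per label, so at most \(m\) non-full blocks. Each block is filled from left to right, so its empty cells form a single run at its right end. When fresh blocks run out, every empty cell lies in one of these at most \(m\) partial blocks or among the leftover cells. Hence the number \(k'\) of remaining elements satisfies \(k'\le ms+s\le k/2\) for suitable constants; to get this I adjust \(m\) down to \(\lceil\sqrt k/4\,\rceil\), which is what I will use. The cost incurred so far is bounded as follows.
\begin{itemize}
\item Inside a block, consecutive filled cells share a label, so each difference is at most \(1/m\). Summed over all blocks this gives at most \(k/m=O(\sqrt k)\).
\item Each boundary between two blocks costs at most \(1\), and there are at most \(k/s=O(\sqrt k)\) of them.
\end{itemize}

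\textbf{Recursion.} The remaining \(k'\) elements are placed recursively into the empty cells, which form \(O(m)\) contiguous runs. The recursive call treats these cells, in array order, as a virtual array. Any actual adjacency in the final array is of one of two kinds.
\begin{itemize}
\item A virtual adjacency, charged to the recursive cost.
\item A pair straddling a run boundary, charged at most \(1\) per boundary. This adds \(O(m)=O(\sqrt k)\).
\end{itemize}
Inside the recursion the runs are split into blocks in virtual order. A block may straddle a run boundary, but that adjacency is already charged above. This gives
\[
T(k)\le C\sqrt k+T(k/2),
\]
which solves to \(T(k)=O(\sqrt k)\). Rescaling back yields the bound \(\xi\sqrt k\,\range(I)\). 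The routine is deterministic and online, because every decision depends only on past arrivals.

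\textbf{Main obstacle.} The delicate step is making the cell count close exactly, since the block must hold exactly \(k\) cells with no slack. The difficulty is showing that the leftover empty cells have total size at most a constant fraction of \(k\) and form only \(O(\sqrt k)\) runs, so that both the recursion size and the gap-boundary charges stay geometric. The left-to-right filling invariant is exactly what gives one empty run per partial block. Getting the constants in \(m\) and \(s\) right, so that \(k'\le k/2\), is the point I expect to need the most care.
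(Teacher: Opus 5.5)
The paper does not prove this statement. It imports it from \cite{abrSODA}: their $O(\sqrt{n})$-cost algorithm for an array with exactly $n$ cells, rescaled from $[0,1]$ to $I$. So there is no in-paper argument to compare against. Your scheme is a correct, self-contained proof of the statement. It uses labelled blocks filled left to right, at most one open block per label, and a geometric recursion on the empty cells once fresh blocks run out. As far as I recall the cited source, this is essentially the block-and-recurse argument behind its bound, but the paper itself does not confirm this.

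Two small points need tidying.

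\emph{The bound on $k'$.} With $m=\lceil\sqrt{k}/4\rceil$ and $s=\lceil 2\sqrt{k}\rceil$ you only get $k'\le (m+1)(s-1)\le k/2+4\sqrt{k}$, not $k'\le k/2$. This is harmless. Once $k$ exceeds an absolute constant, $k'\le 3k/5$, and $T(k)\le C\sqrt{k}+T(3k/5)$ still solves to $O(\sqrt{k})$. Alternatively, take $m=\lceil\sqrt{k}/8\rceil$.

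\emph{The parameter $g$.} It is not needed, and it makes the definition of $T$ ambiguous. It is cleaner to define the cost of a level as the sum over pairs that are consecutive in that level's own virtual order. Then three facts give the recursion.
\begin{itemize}
\item The real adjacencies are exactly the virtual adjacencies of the top level.
\item Two deeper cells that are consecutive in a parent's order are consecutive in the child's order.
\item There are at most $2(m+1)$ consecutive pairs mixing a level's own cells with deeper cells. This is because the deeper cells form at most $m+1$ runs in that level's order: one at the right end of each partial block, plus the leftover run.
\end{itemize}
This yields $T(k)\le k/m+k/s+2(m+1)+T(k')$ directly, independent of how a virtual array sits inside the real one.
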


Let us first consider an idealized setting. Suppose that \(s\) divides \(n\),
and that the array could be partitioned into \(n/s\) contiguous blocks,
which we call buckets, each containing exactly \(s\) cells. Suppose further
that the domain \([0,1]\) could be partitioned into \(n/s\) contiguous
intervals \(I_1,\ldots,I_{n/s}\), in such a way that bucket \(i\) receives
exactly the elements lying in \(I_i\). Then we could apply
\textsc{AdversarialPlacement} independently inside each bucket. The total
in-bucket cost would be at most
\[
    \xi\sqrt{s}\sum_{i=1}^{n/s}\range(I_i)
    =
    \xi\sqrt{s},
\]
since the intervals form a partition of \([0,1]\). Thus, if \(s\) were a
constant, the total in-bucket cost would also be constant. Moreover, because
the buckets are ordered consistently with the intervals, the boundary cost
between consecutive buckets is \(O(1)\): each boundary contribution is bounded
by the total length of the two adjacent intervals, and summing over all
boundaries gives at most a constant.

The difficulty, of course, is that such a partition is not available online. To use small buckets, the algorithm would need to know in advance which regions of the domain contain the right number of elements. In the stochastic version of online sorting, where the input values are sampled from a known distribution, a natural approach is to partition the domain into intervals of equal probability mass. This idea was introduced in~\cite{AbrESA}. However, even
with complete knowledge of the distribution, equal probability mass does not
imply balanced realized loads: random fluctuations may cause some buckets to
overflow while others remain underutilized. To absorb these fluctuations, the
algorithm of~\cite{AbrESA} uses buckets of polynomial size. Their analysis
therefore highlights a limitation of one-shot partitioning methods that commit
to the entire domain decomposition at the beginning of the execution.

Subsequent approaches~\cite{hu2025,KPT, hermansen26} overcame this limitation through
adaptivity. Their key insight is that the algorithm can observe fluctuations
as they occur and reallocate cells to buckets in response to past imbalances.
This makes it possible to use significantly smaller buckets, thereby reducing
the in-bucket cost, while incurring only a negligible increase in the
inter-bucket boundary cost. Our algorithm follows the same guiding principle:
rather than committing to a fine partition in advance, it gradually constructs
and adjusts the relevant buckets as information is revealed by the random
arrival order.

\subsection{The Algorithm and the Main Invariant}

We now give the formal statement of our main result for Random-Order Online
Sorting.

\begin{theorem}\label{thm:random-order-sorting}
There exists an absolute constant \(C>0\) such that, for every adversarially chosen multiset \(X\subseteq[0,1]\) of size \(n\),
\[
    \Pr\!\left[
        \cost(A)\le C\log^2 n\cdot \OPT(X)
    \right]
    \ge 1-n^{-4},
\]
where \(A\) is the array returned by Algorithm~\ref{alg:sorting} when the elements of \(X\) are revealed in uniformly random order.
\end{theorem}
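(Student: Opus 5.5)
The plan is to split the argument into a probabilistic part and a deterministic part, organized around the invariant from the overview. At every update time \(t_j\), the buckets of phase \(j-1\) are almost full (all but \(O(\log n)\) cells per bucket, or a vanishing fraction of the phase's capacity), and the buckets of phase \(j-2\) are completely full.

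\textbf{Probabilistic part.} The single concentration statement needed is a rank-space lemma. Fix an update time \(t\) and a later window of \(m\) arrivals. Condition on the union \(S\) of the prefix and the window. The relative order of the two groups is then a uniformly random binary string with \(t\) ones and \(m\) zeros. Cut the sorted prefix into empirical-quantile intervals of \(q\) consecutive prefix ranks each, with \(q=\Theta(\log n)\) times the appropriate scale. By the negatively-associated (hypergeometric) structure of the string, the number of zeros between the \(a\)-th and \((a+q)\)-th one is concentrated around \(qm/t\). A Chernoff/hypergeometric tail gives deviation \(O(\sqrt{qm/t\cdot\log n}+\log n)\) with probability \(1-n^{-6}\). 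Ties in the multiset are broken by a uniformly random tiebreak, so ranks are well defined. A union bound over the \(O(\log n)\) phases and \(O(n)\) intervals leaves failure probability at most \(n^{-4}\). I will choose the bucket sizes \(s=\Theta(\log^2 n)\) so that the deviation per interval, \(O(\log n)\), is a small constant fraction of \(s\).

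\textbf{Deterministic part.} On the good event, I prove the invariant by induction over the update times of the ascent, middle and descent phases. The key steps are as follows.

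First, a new phase's buckets can be carved out of the cells left empty. Phase \(j-2\) is full and phase \(j-1\) leaves only small slack, so the remaining empty region has the right size. The new buckets are laid out in interval order inside the cells reserved for that phase.

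Second, every arrival finds an eligible bucket. The rule is "earliest eligible bucket with room": the bucket of the current phase for its interval, else the previous phase's, else a fallback. Because loads match capacities up to the \(O(\log n)\) deviation, overflow only spills into the slack kept by neighbouring or older buckets. The slack is arranged so that it is exactly absorbed, and the descent phases, with coarser intervals and so relatively larger slack, soak up the accumulated discrepancies. I expect the main obstacle to be getting the capacity accounting to close exactly. Each phase must end nearly full without ever overflowing, and the last descent phase must fill the array exactly. To handle this, I would first try to set each phase's total capacity to the expected number of arrivals minus a deliberate underprovision of order \(\sqrt{\text{load}\cdot\log n}\). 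The invariant "previous phase almost full" is then what lets leftover elements fill the predecessor's gaps.

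Third, the cost is bounded. Inside a bucket whose elements lie in interval \(I\), Theorem~\ref{thm:adversarial} gives cost \(\xi\sqrt{s}\,\range(I)\). An element placed in a bucket of phase \(j\) lies in that phase's interval or in one of \(O(1)\) adjacent intervals. For a fixed phase the intervals partition \([\min X,\max X]\), so the in-bucket cost per phase is \(O(\sqrt s)\,\OPT(X)\). Boundary costs between consecutive buckets within a phase are bounded by adjacent interval lengths, giving \(O(\OPT)\), and boundaries between phase regions contribute \(O(\OPT)\) each. Summing over \(O(\log n)\) phases gives \(O(\log n\sqrt{s})\,\OPT=O(\log^2 n)\,\OPT\). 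Alternatively, one can use the crude bound per bucket of \(s\cdot\range\) and take the bucket size \(O(\log n)\) at the scale where it matters.

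Finally, I absorb the early coarse phases: the first phase, with constant-size prefix, is handled by treating a single bucket of \(O(\log^2 n)\) cells with cost \(O(\log n)\,\OPT\). The result is \(\cost(A)\le C\log^2 n\cdot\OPT(X)\) with probability at least \(1-n^{-4}\).
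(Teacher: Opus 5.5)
Your probabilistic half (condition on prefix $\cup$ window, uniform binary string and gap vector, hypergeometric tails, union bound over $O(\log n)$ phases) matches the paper. So does your cost accounting ($O(\sqrt{s})$ times the interval length per bucket, $O(\log n)\OPT(X)$ per phase, $O(\log n)$ phases). The gap is the deterministic half, proving the invariant. That is the heart of the argument, and you leave it as ``the main obstacle''; the choices you do sketch would not close it.

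\emph{Scale.} Per-interval window loads are $\mu=\Theta(\tau_K)=\Theta(\log^2 n)$, over $\Theta(n/\log^2 n)$ intervals. Your own tail bound then gives deviations up to $\Gamma=\Theta(\sqrt{\tau_K\log n})=\Theta(\log^{3/2}n)$, not $O(\log n)$. So a slack of $O(\log n)$ per bucket does not dominate the fluctuations. The paper takes slack $\Delta=\eta\tau_K$ with $\eta=\beta/\sqrt{\log n}$ and $\beta$ large, precisely so that $\Gamma\le\Delta/100$ while $\Delta=o(\tau_K)$.

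\emph{Underprovisioning points the wrong way.} If every phase is underprovisioned, every bucket is full by the end of its window and has overflowed by up to $2\Gamma$ before the next phase exists. The predecessor, itself underprovisioned, has no leftover cells to take that overflow. Even when it has some, they sit in buckets of other, non-nested intervals and need not match. An unspecified ``fallback'' would also break your cost bound, which needs every element of $B_\Phi(I)$ to lie in $I$ (or $O(1)$ neighbours).

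\emph{Priority order.} The order you list (current phase first, then the previous phase) is the reverse of Algorithm~\ref{alg:sorting}, which fills the earliest-created eligible cell. With your order, old slack is used only after the current bucket is full, so the ``phase before is completely full'' half of the invariant fails.

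The missing idea is over-provisioning combined with charging and balanced extension. At each update, the empty cells of the preceding phase are charged to an interval of the new phase:
\begin{itemize}
\item In the ascent, they go to some new $J\subseteq I$, and their eligibility is restricted to $J$. This needs the containment lemma: $B(I)$ has already absorbed $\ge\frac34\tau_K-O(\log^{3/2}n)$ prefix points of $I$, while two new intervals hold at most $\frac{1-\eta}{2}\tau_K+O(1)$.
\item In the descent, they go to the merged parent but keep the child's eligibility.
\end{itemize}
Fresh sizes are then chosen so that $s(J)+c(J)$ is equal over all new $J$. The number of empty cells at a fixed update time is deterministic (allocated cells minus $t$), so this effective capacity is deterministic: $b+\Delta/4$, $b+\Delta$, $b+4\Delta$ (with $b=\frac34\tau_K$) for ascent, middle and descent. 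The corresponding loads are $(1-\eta)b$, $b$, $(1+4\eta)b$ up to $\pm\Gamma$. Since $\eta b=\frac34\Delta$, every new bucket ends its window with $\Delta\pm\Gamma\ge 1$ empty cells. Each residual group ($\le\Delta+\Gamma$ cells) has priority and faces $\Omega(\tau_K)$ arrivals in its eligibility interval, so the previous phase fills completely.

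Discrepancies are therefore reset at every update rather than accumulated. Your picture of the descent ``soaking up accumulated discrepancies'' would not work: $\Theta(\log n)$ accumulated deviations of size $\Gamma$ could reach $\Theta(\log^{5/2}n)$, more than a bucket. The descent's actual role is to coarsen the partition as the remaining windows shrink. This keeps every per-interval window load at $\Theta(\tau_K)$, so the concentration still applies. It ends with a single $[0,1]$ bucket of $\tau_K$ fresh cells that makes the cell count close exactly, leaving one empty cell before $x_n$.
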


We next describe the algorithm and the invariant that drives its analysis.
At a high level, the algorithm constructs a sequence of phases. Each phase
consists of a partition of \([0,1]\) into intervals together with a collection
of contiguous buckets of array cells associated with these intervals. Upon the
arrival of an element \(x\), the algorithm scans the buckets in creation order
and places \(x\) in the first bucket containing an empty cell eligible for \(x\). The purpose of the phase structure is to learn increasingly accurate
information about the rank structure of the input from the random prefixes
revealed so far, while continuing to place all arriving elements online. New
phases use empirical quantile partitions obtained from progressively larger
prefixes, whereas previously created phases retain enough remaining capacity to
absorb the discrepancies between these empirical partitions and the future
load.

The evolution of the phases has three parts. During the \emph{ascent}, the
algorithm repeatedly refines its empirical quantile partition as larger and
larger prefixes become available. This is the stage where the algorithm learns
the relevant rank-space structure online. At the finest scale, the algorithm
creates a middle phase with the same partition. It then enters the
\emph{descent}, in which consecutive quadruples of intervals are merged again.
Intuitively, the ascent creates increasingly accurate buckets, while the middle
and descent phases provide enough slack to absorb the load discrepancies caused
by sampling error.  The key invariant, stated below, is that at every update the buckets of the
immediately preceding phase are almost full, while the buckets of the phase
before that are completely full.

\begin{algorithm}[t]
\caption{\textsc{Random-Order Sorting}}
\label{alg:sorting}
\begin{algorithmic}[1]
\Require Empty array \(A\) of size \(n\), input sequence \(x_1,\ldots,x_n\).
\Ensure A placement of all elements into \(A\), or \textsc{Failure}.

\State Initialize the phase collection with \(\Phi_0\).
\State Compute the update times from Definition~\ref{def:time-grid}.

\For{\(t=1,\ldots,n\)}
    \State \(x\gets x_t\).
    \State Let \(B\) be the earliest created bucket containing an empty cell eligible for \(x\).
    \If{no such bucket exists}
        \State \Return \textsc{Failure}.
    \EndIf
    \State Place \(x\) in \(B\) using \textsc{AdversarialPlacement}.

    \If{\(t=t_j^{\uparrow}\) for some \(j\in\{1,\ldots,K\}\)}
        \State Create ascent phase \(\Phi_j^{\uparrow}\), or \Return \textsc{Failure}.
    \ElsIf{\(t=t^{\mathsf{mid}}\)}
        \State Create middle phase \(\Phi^{\mathsf{mid}}\), or \Return \textsc{Failure}.
    \ElsIf{\(t=t_j^{\downarrow}\) for some \(j\in\{1,\ldots,K\}\)}
        \State Create descent phase \(\Phi_j^{\downarrow}\), or \Return \textsc{Failure}.
    \EndIf
\EndFor

\State \Return \(A\).
\end{algorithmic}
\end{algorithm}

We now formalize the notation used by the algorithm. For simplicity of
presentation, we ignore rounding issues throughout the description. All update
times and bucket sizes should be rounded to integers in the natural way; this
affects only constant factors.

\begin{remark}\label{rem:ties}
Throughout the probabilistic analysis, we first assume that all input values
are distinct. This assumption is without loss of generality. If all values are
equal, then \(\OPT(X)=0\) and every placement has cost \(0\). Otherwise, let
\(\delta\) be the minimum positive difference between two distinct values of
\(X\), and perturb equal copies of each value by distinct infinitesimals of
magnitude at most \(\varepsilon\ll \delta/n\). This preserves the relative
order of unequal values and only imposes an arbitrary strict ordering among
equal copies.

The perturbed values are used only for rank comparisons, and hence only for
constructing the empirical partitions and determining to which bucket an
arriving element is assigned. Once a bucket has been selected, the element is
passed to \textsc{AdversarialPlacement} with its actual, unperturbed value.
Thus the perturbation serves solely as a consistent tie-breaking device and
does not affect the low-level placement routine or the cost of the resulting
array. Since \cref{thm:adversarial} applies to arbitrary input sequences,
including sequences with repeated values, its guarantee remains valid.
\end{remark}

\begin{definition}\label{def:time-grid}
Let \(D\ge 1\) and \(\beta\ge 1\) be sufficiently large absolute constants,
fixed throughout the analysis, and let
\[
    \eta = \frac{\beta}{\sqrt{\log n}}.
\]
Assume \(n\) is sufficiently large so that \(4\eta<1/10\); smaller values of
\(n\) can be absorbed into the constants. Let
\[
    K =
    \left\lfloor
        \log_4\left(\frac{n}{2D\log^2 n}\right)
    \right\rfloor .
\]
Define
\[
    \tau_0 = \frac n2,
    \qquad
    \tau_k = \frac{n}{2\cdot 4^k}
    \quad\text{for } k=1,\ldots,K,
    \qquad
    \tau_{K+1}=0.
\]
With this choice, $D\log^2 n \le \tau_K < 4D\log^2 n$. The ascent updates occur at times $
    t_j^{\uparrow} = (1-\eta)\tau_{K-j+1}$, for $\ j=1,\ldots,K.$
The middle update occurs at time
$
    t^{\mathsf{mid}} = (1-\eta)\tau_0.
$
The descent updates occur at times
$
    t_j^{\downarrow} = n-(1+4\eta)\tau_j,$ for $
    j=1,\ldots,K.
$
\end{definition}

\begin{definition}
A phase \(\Phi\) consists of a partition \(\mathcal I(\Phi)\) of \([0,1]\), together with a bucket \(B_\Phi(I)\) of array cells for every interval \(I\in\mathcal I(\Phi)\).
\end{definition}

Phases are ordered by creation time. When a new phase is created, its buckets are allocated as one contiguous block of fresh array cells, placed immediately after the buckets of all previously created phases. These cells remain assigned to that phase for the rest of the execution.

The guiding principle in constructing a new phase is to give every interval the
same effective capacity. The complication is that, when a new phase is created,
the preceding phase may still contain empty cells. Rather than discarding this
remaining capacity, we assign these cells to intervals of the new phase and take
them into account when determining the sizes of the new buckets.

We refer to this assignment as \emph{charging}. More precisely, when a new
phase \(\Phi'\) is created, each empty cell inherited from the preceding phase
is charged to some interval \(J\in\mathcal I(\Phi')\). We write \(c(J)\) for
the number of inherited empty cells charged to \(J\).

An empty cell that has never been charged is called \emph{ordinary}. Once an
empty cell is charged to an interval of a later phase, we call it
\emph{residual}. Residual cells remain physically located in the bucket in which
they were originally created, but are counted as inherited capacity when sizing
the buckets of the new phase. Each residual cell has an eligibility interval,
which determines which future arrivals may be placed into it. Thus, an ordinary empty cell is eligible for any arriving element if it falls into its bucket's interval. A residual empty cell (which carries over from a previous phase) is only eligible for elements falling into the specific interval assigned to it during the charging step.

How the eligibility interval is chosen depends on the evolution of the
partition. During the ascent, the new partition has more intervals, and on the
good event each old interval contains at least one interval of the new
partition. We charge the empty cells of the old bucket to one such contained
interval, and these residual cells are henceforth eligible only for elements
belonging to that interval. During the descent, several old intervals are
merged into a larger one. In this case, the inherited cells are charged to the
merged interval for the purpose of computing effective capacity, but retain
their original eligibility intervals. This preserves the commitments made when
the old buckets were created.

\begin{definition}[Balanced extension]\label{def:balanced-extension}
Suppose a new phase \(\Phi'\) with interval partition
\(\mathcal I(\Phi')\) is created, and suppose that the algorithm assigns
\(L\) fresh cells to its new buckets. Let \(c(J)\) denote the number of  residual cells charged to interval
\(J\in\mathcal I(\Phi')\). The bucket sizes \(s(J)\),
\(J\in\mathcal I(\Phi')\), are chosen so that the effective capacities
$
    s(J)+c(J)
$
are equal over all \(J\in\mathcal I(\Phi')\), and
\[
    \sum_{J\in\mathcal I(\Phi')}s(J)=L.
\]
Equivalently, if \(m=|\mathcal I(\Phi')|\), then
\[
    s(J)=\frac{L+\sum_{J'}c(J')}{m}-c(J).
\]
We say that the balanced extension is feasible if \(s(J)>0\) for every
\(J\in\mathcal I(\Phi')\).
\end{definition}

\begin{remark}
The quantities \(s(J)\) defined above need not be integers. To obtain integer
bucket sizes while preserving the total fresh-cell budget, we initially assign
each interval \(J\) a bucket of size \(\lfloor s(J)\rfloor\). Let
\[
    r
    =
    L-\sum_{J\in\mathcal I(\Phi')} \lfloor s(J)\rfloor.
\]
Since all charges \(c(J)\) are integers and the effective capacities
\(s(J)+c(J)\) are equal before rounding, we have
$
    0\le r<|\mathcal I(\Phi')|.
$
We then assign one additional fresh cell to any \(r\) intervals. Thus the total
number of fresh cells assigned is exactly \(L\), while the effective capacities
of any two intervals differ by at most \(1\).
\end{remark}

We now describe the phases created by the algorithm.

\paragraph{Initial phase.}
The algorithm begins with phase \(\Phi_0\). Its partition is
\(\mathcal I(\Phi_0)=\{[0,1]\}\), and its unique bucket has
\(\tau_K\) cells. Since \(\tau_K=\Theta(\log^2 n)\), the initial
bucket has \(O(\log^2 n)\) cells.

\paragraph{Ascent phases.}
For \(j=1,\ldots,K\), the \(j\)-th ascent update creates phase
\(\Phi^\uparrow_j\) at time \(t^\uparrow_j\). Let
$
    P_j=\{x_1,\ldots,x_{t^\uparrow_j}\}
$
be the prefix observed so far, and write its elements in increasing order as
$
    y_1<y_2<\cdots<y_{t^\uparrow_j}.
$
The new phase has \(4^j\) intervals. For \(i=0,\ldots,4^j\), define
$
    r_i^{(j)}
    =
    \left\lfloor
        {i\,t^\uparrow_j}/{4^j}
    \right\rfloor.
$
The empirical quantile partition is
$
    I_1=[0,y_{r_1^{(j)}}),
$
$
    I_i=
    [y_{r_{i-1}^{(j)}},y_{r_i^{(j)}})
    \text{ for }i=2,\ldots,4^j-1,
$
and
$
    I_{4^j}
    =
    [y_{r_{4^j-1}^{(j)}},1].
$
These intervals form \(\mathcal I(\Phi^\uparrow_j)\). By construction,
consecutive cutoffs satisfy
$
    r_i^{(j)}-r_{i-1}^{(j)}
    =
    {t^\uparrow_j}/{4^j}\pm1,
$
and hence every interval corresponds to
$
    {t^\uparrow_j}/{4^j}\pm O(1)
$
sample gaps.

The total number of fresh cells assigned to ascent phase \(j\) is
$
    L^\uparrow_j
    =
    \tau_{K-j}-\tau_{K-j+1}.
$
On the good event established in the analysis, every interval
\(I\in\mathcal I(\Phi^\uparrow_{j-1})\) contains at least one interval
\(J\in\mathcal I(\Phi^\uparrow_j)\), where
\(\Phi^\uparrow_0=\Phi_0\). We choose one such contained interval
\(J\subseteq I\) and charge all empty cells of
\(B_{\Phi^\uparrow_{j-1}}(I)\) to \(J\). These residual cells have eligibility
interval \(J\). The bucket sizes of \(\Phi^\uparrow_j\) are then chosen by the
balanced extension rule with budget \(L^\uparrow_j\).

\paragraph{Middle phase.}
The middle update creates phase \(\Phi^{\mathsf{mid}}\) at time
\(t^{\mathsf{mid}}\). Its interval partition is the same as that of the final
ascent phase:
$
    \mathcal I(\Phi^{\mathsf{mid}})
    =
    \mathcal I(\Phi_K^{\uparrow}).
$
The total number of new cells assigned to the middle phase is
$
    L^{\mathsf{mid}}=\tau_0-\tau_1=3n/8.
$
The charging map is the identity: for every
\(I\in\mathcal I(\Phi_K^{\uparrow})\), the empty cells of
\(B_{\Phi_K^{\uparrow}}(I)\) are charged to the corresponding interval
\(I\in\mathcal I(\Phi^{\mathsf{mid}})\), and their eligibility interval remains
\(I\). The bucket sizes are again chosen by the balanced extension rule.

\paragraph{Descent phases.}
For \(j=1,\ldots,K\), the \(j\)-th descent update creates phase
\(\Phi_j^{\downarrow}\) at time \(t_j^{\downarrow}\). The partition is obtained
by merging quadruples of intervals from the previous phase. More precisely,
let \(\Psi_{j-1}\) denote the previous phase, where
$
    \Psi_0=\Phi^{\mathsf{mid}}
    \text{ and }
    \Psi_{j-1}=\Phi_{j-1}^{\downarrow}
    \text{ for } j\ge 2.
$
If
$
    \mathcal I(\Psi_{j-1})
    =
    \{I_1,I_2,\ldots,I_{4\kappa}\}
$
in left-to-right order, then
$
    \mathcal I(\Phi_j^{\downarrow})
    =
    \{I'_1,\ldots,I'_{\kappa}\},
$
where
\[
    I'_r
    =
    I_{4(r-1)+1}\cup I_{4(r-1)+2}
    \cup I_{4(r-1)+3}\cup I_{4(r-1)+4}
\]
for $r=1,\ldots,\kappa$. The total number of new cells assigned to descent phase \(j\) is
$
    L_j^{\downarrow}=\tau_j-\tau_{j+1}.
$
Each old interval charges its empty cells to the unique new interval that
contains it, for the purpose of the balanced extension rule. However, these
residual cells keep their original eligibility interval. That is, if the empty
cells of \(B_{\Psi_{j-1}}(I_r)\) are charged to the merged interval \(I'_s\),
then they are counted in \(c(I'_s)\), but they remain eligible only for
elements lying in \(I_r\). The bucket sizes of \(\Phi_j^{\downarrow}\) are
then chosen by the balanced extension rule with budget \(L_j^{\downarrow}\).

The total number of cells allocated by the algorithm is exactly \(n\). Indeed,
\[
    \tau_K
    + \sum_{j=1}^K L_j^{\uparrow}
    + L^{\mathsf{mid}}
    + \sum_{j=1}^K L_j^{\downarrow}
    =
    \tau_K
    + (\tau_0-\tau_K)
    + (\tau_0-\tau_1)
    + \tau_1
    =
    n.
\]
The factor \(4\) in the refinement and merging steps is matched to the time
grid. Indeed, each phase is responsible for a time window whose length grows,
or shrinks, by a factor of \(4\) between consecutive scales. Since the number
of buckets changes by the same factor, the average number of cells per bucket
remains \(\Theta(\tau_K)=\Theta(\log^2 n)\) at every scale.

We are now ready to state the main invariant. The invariant is a deterministic
property of an execution of the algorithm. All statements referring to an update
time are interpreted after the element arriving at that time has been placed and
immediately before the new phase is created. Later, we prove that this property
holds with probability at least \(1-n^{-4}\) over the random arrival order.

\begin{definition}\label{def:main-invariant}
We say that an execution of Algorithm~\ref{alg:sorting} satisfies the
\emph{main invariant} if the following properties hold.

\begin{enumerate}
    \item For every \(j=1,\ldots,K\), immediately before the creation of
    ascent phase \(\Phi_j^{\uparrow}\), all buckets of
    \(\Phi_{j-2}^{\uparrow}\) are full, and every bucket of
    \(\Phi_{j-1}^{\uparrow}\) has between \(1\) and
    \(c_a\log^{3/2}n\) empty cells, where $c_a$ is a sufficiently large absolute constant. Here \(\Phi_0^{\uparrow}\) denotes the
    initial phase \(\Phi_0\), and the condition involving
    \(\Phi_{j-2}^{\uparrow}\) is ignored when \(j=1\).

    \item Immediately before the creation of the middle phase
    \(\Phi^{\mathsf{mid}}\), all buckets of
    \(\Phi_{K-1}^{\uparrow}\) are full, and every bucket of
    \(\Phi_K^{\uparrow}\) has between \(1\) and
    \(c_m\log^{3/2}n\) empty cells, where $c_m$ is a sufficiently large absolute constant.

    \item For every \(j=1,\ldots,K\), immediately before the creation of
    descent phase \(\Phi_j^{\downarrow}\), every bucket of the previous phase
    has between \(1\) and \(c_d\log^{3/2}n\) empty cells, and every bucket of
    the phase before that is full. More explicitly, for \(j=1\), all buckets
    of \(\Phi_K^{\uparrow}\) are full and every bucket of
    \(\Phi^{\mathsf{mid}}\) has between \(1\) and
    \(c_d\log^{3/2}n\) empty cells. For \(j\ge 2\), all buckets of
    \(\Phi_{j-2}^{\downarrow}\) are full and every bucket of
    \(\Phi_{j-1}^{\downarrow}\) has between \(1\) and
    \(c_d\log^{3/2}n\) empty cells, where $c_d$ is a sufficiently large absolute constant.

    \item Immediately before the final arrival \(x_n\) is placed, all buckets
created before \(\Phi_K^{\downarrow}\) are full, and the unique bucket of
\(\Phi_K^{\downarrow}\) has one empty cell.
\end{enumerate}
\end{definition}

\begin{definition}
We call an execution satisfying the main invariant a \emph{good run}, and denote by $\mathcal{G}$ the event that the execution of Algorithm~\ref{alg:sorting} is a good run.
\end{definition}

\subsection{The Main Invariant Implies Success }

We first show that every good run leads to a successful execution of
Algorithm~\ref{alg:sorting}. There are three possible ways in which the
algorithm may fail. First, during an ascent update, an interval of the previous
phase may contain no interval of the new empirical partition, in which case the
charging rule is not well-defined; Lemma~\ref{lem:ascent-containment} rules out
this possibility. Second, a balanced extension may be infeasible, assigning
nonpositive size to some newly created bucket; this is ruled out by
Lemma~\ref{lem:balanced-extension-feasibility}. Finally, an arriving element
may find no eligible empty cell in which it can be placed; Lemma~\ref{lem:no-placement-failure}
shows that this cannot occur either. Together, these lemmas imply that the
algorithm does not declare \textsc{Failure} on any good run.

Before addressing these three failure modes, we record an auxiliary consequence
of the time grid and the balanced extension rule that will be used repeatedly
throughout the analysis.

\begin{lemma}\label{lem:bucket-size-formula}
For every phase created after the initial phase, except for the final descent
phase \(\Phi_K^\downarrow\), if \(L\) denotes the total number of fresh cells
assigned to the phase and \(m\) denotes the number of intervals in its
partition, then
$
    L/m=3\tau_K/4.
$
Consequently, for every new interval \(J\) in such a phase,
\[
    s(J)=\frac34\tau_K+\frac{1}{m}\sum_{J'}c(J')-c(J).
\]
For the final descent phase \(\Phi_K^\downarrow\), the partition consists of
the single interval \([0,1]\), and its unique bucket has size $s([0,1])=\tau_K$. In particular, if every charge \(c(J)\) is \(O(\log^{3/2}n)\), then every
bucket created by the algorithm has size \(O(\log^2 n)\), and every bucket
created before the final descent phase has size
$
    3\tau_K/4 \pm O(\log^{3/2}n).
$
\end{lemma}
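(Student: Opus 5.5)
The plan is a direct case check over the three kinds of non-initial phases, using only the time grid of Definition~\ref{def:time-grid} and the balanced extension rule of Definition~\ref{def:balanced-extension}.

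\textbf{Step 1: compute \(L/m\) phase by phase.} Write \(\tau_k=\tau_0/4^k\) with \(\tau_0=n/2\), so that \(\tau_k-\tau_{k+1}=\frac34\tau_k\) for \(0\le k<K\).

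For ascent phase \(\Phi^\uparrow_j\):
\[
  L^\uparrow_j=\tau_{K-j}-\tau_{K-j+1}=\tfrac34\tau_{K-j}=\tfrac34\cdot 4^j\tau_K,
  \qquad m=4^j,
\]
so \(L/m=3\tau_K/4\).

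For the middle phase, \(m=4^K\) and \(L^{\mathsf{mid}}=\tau_0-\tau_1=\frac34\tau_0=\frac34 4^K\tau_K\), which again gives \(3\tau_K/4\).

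For descent phase \(\Phi^\downarrow_j\), the number of intervals drops by a factor of \(4\) at each merge, starting from \(4^K\). By induction it is \(m=4^{K-j}\), so the partitions are well defined. For \(j<K\),
\[
  L^\downarrow_j=\tau_j-\tau_{j+1}=\tfrac34\tau_j=\tfrac34 4^{K-j}\tau_K,
\]
so again \(L/m=3\tau_K/4\).

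The exception \(j=K\) arises because \(\tau_{K+1}=0\) rather than \(\tau_K/4\). There \(L^\downarrow_K=\tau_K\) and \(m=1\). The partition is the merge of all intervals, namely \([0,1]\). The balanced extension with a single interval forces \(s([0,1])=L=\tau_K\): the charges cancel in \(\frac{L+c}{1}-c\).

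\textbf{Step 2: the formula for \(s(J)\).} Substituting \(L/m=3\tau_K/4\) into the equivalent expression in Definition~\ref{def:balanced-extension} gives
\[
  s(J)=\tfrac34\tau_K+\tfrac1m\sum_{J'}c(J')-c(J).
\]

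\textbf{Step 3: the size bounds.} Assume every \(c(J)=O(\log^{3/2}n)\). Then the average charge \(\frac1m\sum c(J')\) is also \(O(\log^{3/2}n)\). Hence
\[
  s(J)=\tfrac34\tau_K\pm O(\log^{3/2}n).
\]
Integer rounding changes each size by at most \(1\), which is absorbed in the error term.

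Since \(\tau_K<4D\log^2n\), every such bucket has size \(O(\log^2 n)\). The initial bucket has size \(\tau_K\) and the final bucket has size \(\tau_K\), both also \(O(\log^2 n)\).

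\textbf{Main obstacle.} Nothing here is deep. The only care needed is bookkeeping the endpoint \(\tau_{K+1}=0\) for the final descent phase, and confirming that the descent interval counts are exactly \(4^{K-j}\). This assumes rounding is ignored as the paper stipulates, so that \(\tau_K\) and the \(\tau_k\) are treated as the exact quantities \(n/(2\cdot4^k)\).
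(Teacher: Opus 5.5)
Your proposal is correct and follows essentially the same route as the paper. You compute $L/m$ phase by phase from the time grid, with the $\tau_{K+1}=0$ exception giving $s([0,1])=\tau_K$; you substitute into the balanced-extension formula; and you note that an $O(\log^{3/2}n)$ bound on the charges passes to their average, while your extra checks (exactly $4^{K-j}$ descent intervals, and rounding shifting sizes by only $O(1)$) are harmless details the paper leaves implicit.
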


\begin{proof}
For an ascent phase \(j\),
\[
    \frac{L_j^\uparrow}{4^j}
    =
    \frac{\tau_{K-j}-\tau_{K-j+1}}{4^j}
    =
    \frac{3\tau_{K-j+1}}{4^j}
    =
    \frac34\tau_K.
\]
For the middle phase,
\[
    \frac{L^{\mathsf{mid}}}{4^K}
    =
    \frac{\tau_0-\tau_1}{4^K}
    =
    \frac34\tau_K.
\]
For a non-final descent phase \(j<K\),
\[
    \frac{L_j^\downarrow}{4^{K-j}}
    =
    \frac{\tau_j-\tau_{j+1}}{4^{K-j}}
    =
    \frac{3\tau_{j+1}}{4^{K-j}}
    =
    \frac34\tau_K.
\]
For the final descent phase \(j=K\), the partition has one interval and
$
    L_K^\downarrow=\tau_K-\tau_{K+1}=\tau_K.
$
Since there is only one interval, the balanced extension rule gives
\[
    s([0,1])
    =
    L_K^\downarrow+\sum_{J'}c(J')-c([0,1])
    =
    L_K^\downarrow
    =
    \tau_K.
\]

The formula for \(s(J)\) in all non-final phases is just
Definition~\ref{def:balanced-extension}. If every charge is
\(O(\log^{3/2}n)\), then the average charge is also
\(O(\log^{3/2}n)\), and so every non-final bucket has size
$
    3\tau_K/4 \pm O(\log^{3/2}n).
$
The final descent bucket has size \(\tau_K\). Since
\(\tau_K=\Theta(\log^2 n)\), the claim follows.
\end{proof}

\begin{lemma}\label{lem:ascent-containment}
Assume that an execution satisfies the main invariant. Then, for every
\(j=0,\ldots,K-1\) and every interval
\(I\in\mathcal I(\Phi_j^\uparrow)\), there exists an interval
\(J\in\mathcal I(\Phi_{j+1}^\uparrow)\) such that $J\subseteq I$.
Consequently, the charging rule in every ascent update is well-defined.
\end{lemma}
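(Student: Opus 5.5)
We prove the case $j=0$ directly and reduce every $j\ge 1$ to counting how many points of the larger prefix $P_{j+1}$ fall in $I$. For $j=0$ the only interval is $[0,1]$, which contains every interval of $\mathcal I(\Phi_1^\uparrow)$. For $j\ge1$, fix $I\in\mathcal I(\Phi_j^\uparrow)$. The plan is to show that $I$ contains a long run of consecutive elements, in sorted order, of the prefix $P_{j+1}=\{x_1,\dots,x_{t^\uparrow_{j+1}}\}$. A run that long must contain two consecutive cutoffs $y_{r^{(j+1)}_{i-1}},y_{r^{(j+1)}_{i}}$ of the new empirical partition, and then the new interval between them lies inside $I$.

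First, we record the scales. We have $t^\uparrow_j=(1-\eta)\tau_{K-j+1}=(1-\eta)4^{j-1}\tau_K$ and $t^\uparrow_{j+1}=4t^\uparrow_j$. Write $g=t^\uparrow_j/4^j=(1-\eta)\tau_K/4$. Then every interval of $\Phi_j^\uparrow$ spans $g\pm O(1)$ sample gaps of $P_j$, and every interval of $\Phi_{j+1}^\uparrow$ spans $g\pm O(1)$ sample gaps of $P_{j+1}$.

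Second, we count the points of $P_{j+1}$ lying in $I$ in two parts.
\begin{enumerate}
\item Since $P_j\subseteq P_{j+1}$, the set $I$ contains the at least $g-O(1)$ points of $P_j$ that defined it.
\item Every element placed in $B_{\Phi_j^\uparrow}(I)$ during $(t^\uparrow_j,t^\uparrow_{j+1}]$ lies in $I$. Ordinary cells of that bucket are eligible only for $I$. The bucket cannot receive residual cells from another phase, because the residual cells charged to $J\subseteq I$ physically sit in buckets of $\Phi_{j-1}^\uparrow$.
\end{enumerate}

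Third, we apply the main invariant to bound the second count. Item 1 of the invariant, applied to the update creating $\Phi_{j+1}^\uparrow$ (or item 2 when $j=K-1$ is followed by the middle phase; note that $j\le K-1$ here, so the next ascent update exists), says $B_{\Phi_j^\uparrow}(I)$ has at most $c_a\log^{3/2}n$ empty cells. Item 1 at the earlier update bounds every charge by $O(\log^{3/2}n)$. Lemma~\ref{lem:bucket-size-formula} then gives $|B_{\Phi_j^\uparrow}(I)|\ge 3\tau_K/4-O(\log^{3/2}n)$. At time $t^\uparrow_j$ the bucket was freshly created and empty, so at least $3\tau_K/4-O(\log^{3/2}n)$ elements of $I$ arrived in that window.

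Combining, $I$ contains at least
\[
(1-\eta)\tfrac{\tau_K}{4}+\tfrac{3\tau_K}{4}-O(\log^{3/2}n)\;\ge\;\tau_K-O(\log^{3/2}n)
\]
points of $P_{j+1}$, using $\eta\tau_K=O(\beta D\log^{3/2}n)$. Since $\tau_K\ge D\log^2 n$ with $D$ large, this exceeds $2g+O(1)\approx\tau_K/2$.

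Fourth, the combinatorial step. The points of $P_{j+1}$ in the convex set $I$ form a consecutive block of the sorted order. New cutoffs occur every $g\pm1$ ranks, so a block of more than $2g+2$ consecutive ranks contains two consecutive cutoffs $y_{r_{i-1}},y_{r_i}$ both in $I$. Then $J=[y_{r_{i-1}},y_{r_i})\subseteq I$ by convexity. The boundary intervals $I_1=[0,\cdot)$ and $I_{4^j}=[\cdot,1]$ need separate handling. There one may instead use the new extreme interval $[0,y_{r_1})$ or $[y_{r_{m-1}},1]$, whose single cutoff lies in $I$ because the block starts at the minimum, or ends at the maximum, of $P_{j+1}$.

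I expect the main obstacle to be the second count: justifying that the filled cells of $B_{\Phi_j^\uparrow}(I)$ correspond to distinct arrivals in $I$ within exactly this time window, together with the rounding bookkeeping. The margin of about $\tau_K/2$ is large, so constants should not be tight.
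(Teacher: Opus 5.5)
Your proposal is correct and follows essentially the same route as the paper. Both lower-bound the prefix points in $I$ by the arrivals filling the nearly full bucket $B_{\Phi_j^\uparrow}(I)$ (main invariant plus Lemma~\ref{lem:bucket-size-formula}). Both compare this with the at most $2w_{j+1}+2\approx\tau_K/2$ prefix points that $I$ could hold if it contained no new interval; your ``two consecutive cutoffs'' step is just the contrapositive of the paper's ``intersects at most two new intervals''. The remaining differences are cosmetic. You dispose of $j=0$ directly since $\mathcal I(\Phi_0)=\{[0,1]\}$. Your extra count of the $\approx g$ points of $P_j$ in $I$ is unnecessary, because the $3\tau_K/4-O(\log^{3/2}n)$ bucket arrivals alone already exceed $\tau_K/2$.
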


\begin{proof}
Fix \(j\in\{0,\ldots,K-1\}\) and an interval \(I\in\mathcal I(\Phi^\uparrow_j)\). Consider the time immediately before the creation of \(\Phi^\uparrow_{j+1}\).

By the main invariant, the bucket \(B_{\Phi^\uparrow_j}(I)\) has at most
\(O(\log^{3/2}n)\) empty cells at this time. If \(j=0\), then
\(B_{\Phi_0}(I)\) is the unique initial bucket, whose size is \(\tau_K\).
Hence it has received at least $\tau_K-O(\log^{3/2}n)$
elements. If \(j\ge1\), then the preceding lemma implies that every bucket
created in an ascent phase has size $ 3\tau_K/4\pm O(\log^{3/2}n), $ and therefore \(B_{\Phi^\uparrow_j}(I)\) has received at least $\frac34\tau_K-O(\log^{3/2}n)$ elements. Thus, in either case, by the time
\(\Phi^\uparrow_{j+1}\) is created, the prefix used to define its empirical
partition contains at least $3\tau_K/4-O(\log^{3/2}n)$
elements lying in \(I\).

Let $M=t^\uparrow_{j+1}, m=4^{j+1}, $
and recall that the empirical rank cutoffs are
$
    r_i^{(j+1)}
    =
    \left\lfloor{iM}/{m}\right\rfloor$ for $
    i=0,\ldots,m.
$
Define
$
    w_{j+1}
    =
    \max_{1\le i\le m}
    \left(
        r_i^{(j+1)}-r_{i-1}^{(j+1)}
    \right).
$
By the balanced-cutoff construction,
\[
    w_{j+1}
    \le
    \left\lceil\frac{M}{m}\right\rceil
    =
    \frac{1-\eta}{4}\tau_K+O(1).
\]
Moreover, every interval of
\(\mathcal I(\Phi^\uparrow_{j+1})\) contains at most \(w_{j+1}+1\)
points of the prefix: every interior interval contains at most
\(w_{j+1}\) such points, while the possible additional \(1\) accounts for
a boundary interval.

Suppose, for contradiction, that \(I\) contains no interval of
\(\mathcal I(\Phi^\uparrow_{j+1})\). Since both \(I\) and the intervals of
\(\mathcal I(\Phi^\uparrow_{j+1})\) are contiguous, \(I\) can then intersect
at most two intervals of the new partition; otherwise, an interval lying
between the first and last intersected intervals would be fully contained in
\(I\). Consequently, \(I\) contains at most
\[
    2w_{j+1}+2
    =
    \frac{1-\eta}{2}\tau_K+O(1)
\]
points of the prefix. On the other hand, as shown above, \(I\) contains at least
$
    3\tau_K/4-O(\log^{3/2}n)
$
points of the prefix. Since
$
    \tau_K=\Theta(\log^2 n),
$
for all sufficiently large \(n\),
\[
    \frac34\tau_K-O(\log^{3/2}n)
    >
    \frac{1-\eta}{2}\tau_K+O(1).
\]
This is a contradiction. Hence there exists an interval
$
    J\in\mathcal I(\Phi^\uparrow_{j+1})
$
such that \(J\subseteq I\). Therefore the charging rule in every ascent
update is well-defined.
\end{proof}

\begin{lemma}
\label{lem:balanced-extension-feasibility}
Assume that an execution satisfies the main invariant. Then every balanced
extension step assigns a positive number of fresh cells to every newly created
bucket.
\end{lemma}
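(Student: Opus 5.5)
The plan is to use the explicit formula from Lemma~\ref{lem:bucket-size-formula} and bound each charge \(c(J)\) using the main invariant. In every non-final phase,
\[
    s(J)=\tfrac34\tau_K+\tfrac1m\textstyle\sum_{J'}c(J')-c(J)\ \ge\ \tfrac34\tau_K-\max_{J}c(J),
\]
since the average charge is nonnegative. So it suffices to show \(\max_J c(J)=O(\log^{3/2}n)\). Then \(\tau_K\ge D\log^2 n\) with \(D\) large makes the right-hand side positive (indeed \(\Omega(\log^2 n)\)) for all sufficiently large \(n\); small \(n\) are absorbed into constants as elsewhere. The final descent phase needs no argument, because Lemma~\ref{lem:bucket-size-formula} gives \(s([0,1])=\tau_K>0\) directly.

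The core step is bounding the charges phase by phase, in each case using the invariant at the moment of creation. At each update the only cells that can be charged are the empty cells of the immediately preceding phase, since older phases are full. The invariant bounds these by \(c_a\log^{3/2}n\), \(c_m\log^{3/2}n\) or \(c_d\log^{3/2}n\) per bucket.

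\begin{itemize}
\item \textbf{Ascent.} Each old interval \(I\) charges all its empty cells to a single contained \(J\subseteq I\), which exists by Lemma~\ref{lem:ascent-containment}. Distinct old intervals are disjoint, so a given \(J\) receives charges from at most one old bucket. Hence \(c(J)\le c_a\log^{3/2}n\).
\item \textbf{Middle phase.} The charging map is the identity, so \(c(J)\le c_a\log^{3/2}n\) again, now using property~2 of the invariant.
\item \textbf{Descent.} Each new interval is a union of exactly four old intervals, so \(c(J)\le 4c_d\log^{3/2}n\).
\end{itemize}

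There is one subtlety to check. Residual cells charged in an earlier update that are still empty are not recharged later, because the invariant says the phase before the preceding one is completely full. So only the preceding phase's own bucket cells contribute, but those bucket cells may include the residual cells charged to them earlier. I would clarify that "empty cells of \(B_\Phi(I)\)" counts the physically located empty cells of that bucket, which the invariant bounds directly. Cells physically sitting in older, full phases are already filled, so no double counting arises.

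The main obstacle is making sure the invariant's bound applies to exactly the set of cells being charged in each update. The arithmetic is then immediate: \(\tfrac34 D\log^2 n-4\max(c_a,c_m,c_d)\log^{3/2}n>0\) for large \(n\).
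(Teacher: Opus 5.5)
Your proposal is correct and follows the paper's proof: you bound $s(J)\ge L/m-c(J)$, bound each charge by $O(\log^{3/2}n)$ via the invariant (one old bucket per new interval in the ascent and middle updates, four in the descent), use $L/m=\tfrac34\tau_K=\Theta(\log^2 n)$, and handle $\Phi_K^\downarrow$ separately via $s([0,1])=\tau_K$. Two slips are cosmetic: in the middle update the bound is $c_m\log^{3/2}n$ from property~2, not $c_a\log^{3/2}n$; and your ``subtlety'' is simpler than you describe, because residual cells charged to the preceding phase's intervals sit physically in the full phase before it, so the empty cells of the preceding phase's buckets are all ordinary.
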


\begin{proof}
Consider an update that creates a new phase \(\Phi'\). Let
\(m=|\mathcal I(\Phi')|\), let \(L\) be the number of fresh cells assigned to
the phase, and let \(c(J)\) be the charge assigned to
\(J\in\mathcal I(\Phi')\). By the balanced extension rule,
\[
    s(J)
    =
    \frac{L+\sum_{J'\in\mathcal I(\Phi')}c(J')}{m}
    -
    c(J)
    =
    \frac{L}{m}
    +
    \frac{1}{m}\sum_{J'\in\mathcal I(\Phi')}c(J')
    -
    c(J).
\]
Thus
$
    s(J)\ge {L}/{m}-c(J).
$

By the main invariant, every bucket whose empty cells are charged has at most
\(O(\log^{3/2}n)\) empty cells at the corresponding update time. In an ascent
update, Lemma~\ref{lem:ascent-containment} shows that the charging map is
well-defined, and each new interval receives the charge of at most one old
interval. The same is true in the middle update. In a descent update, each new
interval is the union of four old intervals, and hence receives charges from
at most four old intervals. Therefore, for every update and every new interval
\(J\),
$
    c(J)\le C\log^{3/2}n
$
for some absolute constant \(C\).

If \(\Phi'\neq \Phi_K^\downarrow\), then by
Lemma~\ref{lem:bucket-size-formula},
$
    {L}/{m}=3\tau_K/4=\Theta(\log^2 n).
$
Hence, for all sufficiently large \(n\),
$
    s(J)
    \ge
    3\tau_K/4-C\log^{3/2}n
    >0.
$ If \(\Phi'=\Phi_K^\downarrow\), then its partition consists of the single
interval \([0,1]\), and Lemma~\ref{lem:bucket-size-formula} gives
$
    s([0,1])=\tau_K>0.
$
Thus every balanced extension step assigns a positive number of fresh cells to
every newly created bucket.
\end{proof}

\begin{lemma}\label{lem:no-placement-failure}
Assume that an execution satisfies the main invariant. Then no arriving
element is rejected because all of its eligible empty cells are full.
\end{lemma}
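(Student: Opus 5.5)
The plan is to show that at every arrival time there is at least one ordinary empty cell in a bucket of the most recently created phase whose interval contains the arriving element. Since Algorithm~\ref{alg:sorting} only needs \emph{some} eligible cell to exist (the "earliest created bucket" rule then picks one), this is enough. The argument uses only two facts. First, empty cells are never freed, so the number of empty cells in any fixed bucket is non-increasing in time. Second, the cells of a phase stay ordinary until the next phase is created, because charging happens only at the creation of the next phase.

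I would split the time horizon into windows according to the update times of Definition~\ref{def:time-grid}. These are $(0,t_1^\uparrow]$, then $(t_j^\uparrow,t_{j+1}^\uparrow]$, $(t_K^\uparrow,t^{\mathsf{mid}}]$, $(t^{\mathsf{mid}},t_1^\downarrow]$, $(t_j^\downarrow,t_{j+1}^\downarrow]$, and finally $(t_K^\downarrow,n]$. Fix a window and let $\Phi$ be the newest phase during it; $\Phi_0$ is the newest phase for the first window. Throughout the window, the buckets of $\Phi$ consist only of ordinary cells. By Lemmas~\ref{lem:ascent-containment} and~\ref{lem:balanced-extension-feasibility}, $\Phi$ was actually created and every one of its buckets has positive size. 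Since $\mathcal I(\Phi)$ partitions $[0,1]$, every arriving element $x$ lies in a unique interval $I\in\mathcal I(\Phi)$, and every ordinary empty cell of $B_\Phi(I)$ is eligible for $x$. This holds also in the descent phases, where an ordinary cell of a merged bucket accepts any element of the merged interval.

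It then remains to show that $B_\Phi(I)$ is nonempty at each arrival time $t$ in the window. For windows ending at an update time $u$, the main invariant (items 1--3) says that at $u$, after $x_u$ has been placed, every bucket of the phase preceding the new one (namely $\Phi$) has at least $1$ empty cell. By monotonicity, $B_\Phi(I)$ had at least one empty cell immediately before every placement at times $t\le u$ in the window, including $t=u$ itself. For the final window, item 4 says that immediately before $x_n$ is placed the unique bucket of $\Phi_K^\downarrow$ still has an empty cell. Its interval is $[0,1]$, so the same monotonicity argument covers all $t\in(t_K^\downarrow,n]$.

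The main point to check carefully is the bookkeeping at update times. The arrival at an update time is placed \emph{before} the new phase is created, so it must be served by the old newest phase; this is exactly why the invariant is stated after that placement and guarantees "at least $1$" rather than "at least $0$" empty cells. I would also check that the cells of $\Phi$ are not reclassified as residual before the window ends. This follows because charging of $\Phi$'s empty cells happens only at the creation of the next phase, which occurs after the arrival at $u$ has been placed. Older residual cells never need to be used in this argument, so their restricted eligibility intervals cause no difficulty.
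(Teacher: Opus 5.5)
Your proposal is correct and follows essentially the same argument as the paper. On each window between updates, the buckets of the newest phase contain only ordinary cells, and the partition supplies a bucket whose interval contains the arriving element. Monotonicity of empty-cell counts, together with items 1--3 of the main invariant (and item 4 for the final window), then guarantees that this bucket still has an eligible empty cell; your explicit handling of the arrival at the update time itself just makes the paper's argument more precise.
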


\begin{proof}
Consider first a period between the creation of some phase \(\Phi\) and the
next update time. During this period, no later phase has yet been created, so
all empty cells of \(\Phi\) are ordinary cells. By the main invariant,
immediately before the next update, every bucket of \(\Phi\) has at least one
empty cell. Since the number of empty cells in a bucket is monotone
non-increasing over time, every bucket of \(\Phi\) has at least one ordinary
empty cell throughout the period.

Now fix an element \(x\) arriving during this period. Since
\(\mathcal I(\Phi)\) is a partition of \([0,1]\), there is an interval
\(I\in\mathcal I(\Phi)\) such that \(x\in I\). The bucket
\(B_\Phi(I)\) therefore contains an ordinary empty cell eligible for \(x\).
Hence the placement step cannot fail.

It remains to consider the final period after the creation of
\(\Phi_K^\downarrow\). This phase consists of the single interval \([0,1]\).
By item~(4) of the main invariant, immediately before the final arrival
\(x_n\) is placed, all buckets created before \(\Phi_K^\downarrow\) are full
and the unique bucket of \(\Phi_K^\downarrow\) has one empty cell. By
monotonicity, this bucket has at least one empty cell at every earlier time in
the final period. Since its interval is \([0,1]\), that empty cell is eligible
for every arriving element. Thus no placement failure occurs in the final
period either.
\end{proof}

Combining the preceding lemmas, we obtain the following deterministic
consequence.

\begin{corollary}\label{cor:invariant-implies-no-failure}
On the event \(\mathcal G\), Algorithm~\ref{alg:sorting} does not declare
\textsc{Failure}.
\end{corollary}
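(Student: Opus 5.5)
The corollary follows by enumerating the only three places where Algorithm~\ref{alg:sorting} can return \textsc{Failure} and excluding each one on the event \(\mathcal G\). Failure can arise only at an update step (creation of an ascent, middle, or descent phase) or at a placement step (no eligible bucket for the arriving element). At an update, failure means either the charging rule is undefined or the balanced extension is infeasible.

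Fix an execution in \(\mathcal G\).

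\textbf{Ascent updates.} Charging is well-defined here. By Lemma~\ref{lem:ascent-containment}, every interval of \(\mathcal I(\Phi_{j}^\uparrow)\) contains some interval of \(\mathcal I(\Phi_{j+1}^\uparrow)\), so the chosen target \(J\) exists.

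\textbf{Middle and descent updates.} Charging is always well-defined, because the charging map is explicit. At the middle update it is the identity. At a descent update each old interval is charged to the unique merged interval containing it; that interval exists because the number of intervals at each descent scale is divisible by \(4\) (it equals \(4^{K-j+1}\)).

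\textbf{Feasibility of every update.} Given that charging is well-defined, Lemma~\ref{lem:balanced-extension-feasibility} shows that every balanced extension assigns positive size to every new bucket. Positivity survives the integer rounding in the subsequent remark: the real sizes are at least \(3\tau_K/4 - O(\log^{3/2} n)\), which is much larger than \(1\), so their floors stay positive.

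\textbf{Placement steps.} Lemma~\ref{lem:no-placement-failure} shows that no arriving element ever finds all its eligible cells full. Since the algorithm scans buckets in creation order, the earliest eligible bucket therefore exists.

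One further point is that the algorithm never runs out of array cells. The allocation identity computed before the invariant shows that exactly \(n\) cells are allocated in total, so every phase's fresh cells fit in the array.

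Combining these cases, no step of the loop returns \textsc{Failure} on \(\mathcal G\). The only subtlety is ordering the lemmas correctly: the feasibility lemma presupposes that charging is well-defined, which is exactly what Lemma~\ref{lem:ascent-containment} supplies for ascent updates. With that order, there is no real obstacle.
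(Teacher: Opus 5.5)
Your proposal is correct and follows the paper's argument. The paper likewise names the three failure modes (ill-defined ascent charging, infeasible balanced extension, no eligible cell) and rules them out via Lemmas~\ref{lem:ascent-containment}, \ref{lem:balanced-extension-feasibility}, and \ref{lem:no-placement-failure}; your extra checks (divisibility of the descent merges, positivity after flooring, and the total of exactly $n$ allocated cells) are sound and make explicit what the paper leaves implicit.
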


\subsection{Cost under a Good Run}

We now bound the cost of the algorithm on the event \(\mathcal G\). By
Corollary~\ref{cor:invariant-implies-no-failure}, on \(\mathcal G\) the
algorithm does not declare \textsc{Failure} and therefore returns a well-defined
array \(A\).

Let $R_X=[\min X,\max X]$. For an interval \(I\subseteq[0,1]\), write $\ell_X(I)=|I\cap R_X|$. Thus, for every partition \(\mathcal I\) of \([0,1]\),
\[
    \sum_{I\in\mathcal I}\ell_X(I)
    =
    \range(X)
    =
    \OPT(X).
\]

We first record a simple consequence of the balanced extension rule and the
main invariant. On \(\mathcal G\), every bucket created by the algorithm has
size \(O(\log^2 n)\). Indeed, the initial bucket has size
\(\Theta(\tau_K)=O(\log^2 n)\). For every later phase, the number of fresh
cells per interval is \((3/4)\tau_K=O(\log^2 n)\), while the total charge
assigned to any interval is \(O(\log^{3/2}n)\). Hence every bucket has size
\(O(\log^2 n)\).

\begin{lemma}\label{lem:cost-good-run}
On the event \(\mathcal G\), Algorithm~\ref{alg:sorting} returns an array \(A\)
satisfying
\[
    \cost(A)\le O(\log^2 n)\cdot\OPT(X).
\]
\end{lemma}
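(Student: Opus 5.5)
The plan is to split the cost into \emph{in-bucket} cost, which is the sum of $|A[i]-A[i-1]|$ over adjacent pairs inside a single bucket, and \emph{boundary} cost, which is the sum over the adjacent pairs that straddle two consecutive buckets. On $\mathcal G$, every bucket has size $O(\log^2 n)$. For the in-bucket cost, I will first show that all elements placed in a bucket $B$ lie in a single interval $\widehat I(B)$. For a bucket $B_\Phi(I)$ this is $I$ itself. Residual cells of $B_\Phi(I)$ are only eligible for a subinterval $J\subseteq I$ in the ascent, for $I$ itself in the middle and descent phases, and for their original interval in the descent. Hence every element in $B_\Phi(I)$ lies in $I$, and thus in $I\cap R_X$.

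Applying \cref{thm:adversarial} to each bucket, with its values confined to $I\cap R_X$, gives in-bucket cost at most $\xi\sqrt{|B|}\,\ell_X(I)=O(\log n)\,\ell_X(I)$. One subtlety: \textsc{AdversarialPlacement} is run on a block whose cells may be filled across several phases. Since eligibility only ever restricts to subintervals of $I$, the routine still sees a sequence of at most $|B|$ values in $I$, so its guarantee applies. Summing over the intervals of one phase gives $O(\log n)\sum_{I\in\mathcal I(\Phi)}\ell_X(I)=O(\log n)\OPT(X)$. There are $2K+2=O(\log n)$ phases, so the total in-bucket cost is $O(\log^2 n)\OPT(X)$.

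For the boundary cost I distinguish two kinds of boundary.
\begin{itemize}
\item \emph{Boundaries inside a phase.} Buckets of a phase are laid out left to right in the order of their intervals. If I allocate them that way (the paper leaves the order free, so I fix it), a boundary between $B_\Phi(I_i)$ and $B_\Phi(I_{i+1})$ joins two values in $I_i\cup I_{i+1}$. Its cost is therefore at most $\ell_X(I_i)+\ell_X(I_{i+1})$, because $I_i\cup I_{i+1}$ is an interval. Summing over $i$ gives at most $2\OPT(X)$ per phase.
\item \emph{Boundaries between consecutive phases.} There are $O(\log n)$ of these, each costing at most $\range(X)=\OPT(X)$.
\end{itemize}
So the total boundary cost is $O(\log n)\OPT(X)$. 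One also has to handle boundaries next to a possibly empty cell or a trivially short bucket. On $\mathcal G$ the run does not fail and the array is completely full, so no cell is empty, and positive sizes follow from Lemma~\ref{lem:balanced-extension-feasibility}.

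The main point needing care is the in-bucket step. I must verify that every element ever placed into $B_\Phi(I)$, both as an ordinary cell and later as a residual cell, lies in $I$. For ascent charging this holds because $J\subseteq I$ (Lemma~\ref{lem:ascent-containment}); for descent the eligibility interval is unchanged. I also need the per-bucket bound $\sqrt{|B|}=O(\log n)$, which comes from Lemma~\ref{lem:bucket-size-formula} together with the charge bound $O(\log^{3/2}n)$ implied by the invariant. Combining the in-bucket and boundary bounds yields $\cost(A)\le O(\log^2 n)\OPT(X)$.
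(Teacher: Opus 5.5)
Your proposal is correct and follows essentially the same route as the paper. It uses the identical three-way split: in-bucket cost (Theorem~\ref{thm:adversarial} with $|B|=O(\log^2 n)$, summed over each phase's partition and then over the $O(\log n)$ phases), inter-bucket cost within a phase (at most $\ell_X(I)+\ell_X(J)$ per boundary, hence $O(\OPT(X))$ per phase), and $O(\log n)$ inter-phase boundaries costing at most $\OPT(X)$ each. Your extra checks make explicit what the paper asserts without comment: that residual cells of $B_\Phi(I)$ only ever receive elements of $I$ (all empty cells of a bucket share one eligibility interval, so \textsc{AdversarialPlacement} is never constrained), and that bucket sizes are positive, so adjacent buckets carry adjacent intervals.
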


\begin{proof}
We decompose the cost into three parts.

\paragraph{In-bucket cost.}
The in-bucket cost is the contribution of adjacent cells that belong to the
same bucket. Consider a bucket \(B_\Phi(I)\) associated with an interval
\(I\in\mathcal I(\Phi)\). All elements placed in this bucket lie in
\(I\cap R_X\). By Theorem~\ref{thm:adversarial}, the total cost incurred
inside this bucket is at most
\[
    O\bigl(\sqrt{|B_\Phi(I)|}\bigr)\cdot \ell_X(I).
\]
Since every bucket has size \(O(\log^2 n)\), this is at most
$
    O(\log n)\cdot\ell_X(I).
$
Summing over all buckets of a fixed phase \(\Phi\), and using that
\(\mathcal I(\Phi)\) is a partition of \([0,1]\), we obtain
\[
    \sum_{I\in\mathcal I(\Phi)}
    O(\log n)\cdot\ell_X(I)
    =
    O(\log n)\cdot\OPT(X).
\]
The algorithm creates \(O(\log n)\) phases, so the total in-bucket cost is
$
    O(\log^2 n)\cdot\OPT(X).
$

\paragraph{Inter-bucket cost.}
The inter-bucket cost is the contribution of adjacent cells that belong to
different buckets of the same phase. Buckets are ordered consistently with
their associated intervals. Thus, if two neighboring buckets are associated
with neighboring intervals \(I\) and \(J\), then the edge crossing between
these buckets has length at most
$
    \ell_X(I)+\ell_X(J).
$
Each interval has at most two neighboring intervals, and hence each
\(\ell_X(I)\) is charged only a constant number of times. Therefore, the total
inter-bucket cost within a fixed phase is
$
    O(\OPT(X)).
$
Summing over the \(O(\log n)\) phases gives total inter-bucket cost
$
    O(\log n)\cdot\OPT(X).
$

\paragraph{Inter-phase cost.}
Finally, we consider adjacent cells that belong to different phases. Each such
edge has length at most \(\OPT(X)\), since all input values lie in \(R_X\).
There are \(O(\log n)\) phases, and therefore only \(O(\log n)\) phase
boundaries. Hence the total inter-phase cost is
$
    O(\log n)\cdot\OPT(X).
$

Combining the three bounds gives
\[
    \cost(A)
    \le
    O(\log^2 n)\cdot\OPT(X).
\]
\end{proof}

\subsection{Probability of a Good Run}

It remains to show that the main invariant holds with high probability. More
precisely, we prove the following stronger statement: for every fixed constant
\(\alpha>0\), after choosing the constants \(D\) and \(\beta\) in
Definition~\ref{def:time-grid} sufficiently large as functions of \(\alpha\),
$
    \Pr[\mathcal G]\ge 1-n^{-\alpha}.
$
Taking \(\alpha=4\), together with the results of the preceding subsections,
will prove Theorem~\ref{thm:random-order-sorting}.

The proof has three steps. First, we represent the interaction between an
observed prefix and a later time window as a uniformly random binary string.
Second, we use this representation to show that every empirical rank interval
receives close to its expected load. Finally, we show that all the concentration estimates required throughout the execution hold simultaneously with high probability. On this event, called \(\mathcal E_{\mathrm{reg}}\),  every phase receives the right amount of load over the relevant time windows, which in turn implies the main invariant, i.e. $\mathcal E_{\mathrm{reg}} \subseteq \mathcal{G}$.

\paragraph{Random-order sampling and rank gaps.}

We begin with the elementary sampling property underlying the analysis.

\begin{proposition}\label{prop:random-order-sampling}
Let \(U\) be a set of \(N\) distinct elements revealed in uniformly random
order \(u_1,\ldots,u_N\). Fix \(M\in[N]\) and a deterministic set of time
indices
$
    W\subseteq\{M+1,\ldots,N\},\ |W|=L
$. Define
\[
    S=\{u_1,\ldots,u_M\},
    \qquad
    F=\{u_t:t\in W\},
    \qquad
    H=S\cup F.
\]
Then \(H\) is a uniformly random \((M+L)\)-subset of \(U\). Moreover,
conditioned on \(H\), the prefix sample \(S\) is a uniformly random
\(M\)-subset of \(H\).
\end{proposition}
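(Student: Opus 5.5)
The plan is a symmetry argument: we show that the relevant joint distribution is invariant under relabeling of the elements of \(U\). We identify the arrival order with a uniformly random bijection \(\sigma:[N]\to U\), \(\sigma(t)=u_t\), so every bijection has probability \(1/N!\). Set \(T=\{1,\ldots,M\}\cup W\), a deterministic index set of size \(M+L\); it has this size because \(W\) is disjoint from \([M]\). Then \(H=\sigma(T)\) and \(S=\sigma([M])\), so both statements concern the image of fixed index sets under a uniform bijection.

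For the first claim, we fix an arbitrary \((M+L)\)-subset \(H_0\subseteq U\) and count the bijections with \(\sigma(T)=H_0\). Such a bijection is determined by a bijection \(T\to H_0\) together with a bijection \([N]\setminus T\to U\setminus H_0\). There are \((M+L)!\,(N-M-L)!\) of them, and this number does not depend on \(H_0\). Hence \(\Pr[H=H_0]=1/\binom{N}{M+L}\), so \(H\) is a uniform \((M+L)\)-subset.

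For the second claim, we fix \(H_0\) as above and an \(M\)-subset \(S_0\subseteq H_0\), and count the bijections with \(\sigma(T)=H_0\) and \(\sigma([M])=S_0\). These are exactly the bijections mapping \([M]\to S_0\), \(W\to H_0\setminus S_0\), and \([N]\setminus T\to U\setminus H_0\). Their number is \(M!\,L!\,(N-M-L)!\), again independent of the choice of \(S_0\). Dividing by the count from the first step gives
\[
\Pr[S=S_0\mid H=H_0]=\frac{M!\,L!}{(M+L)!}=\binom{M+L}{M}^{-1},
\]
so conditioned on \(H\), the sample \(S\) is a uniform \(M\)-subset of \(H\).

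There is no real obstacle here. The only points needing care are the following:
\begin{itemize}
\item \(W\) must be deterministic, or at least independent of \(\sigma\). We use this when we count bijections with \(T\) fixed.
\item \(W\) must be disjoint from \([M]\). This guarantees \(|T|=M+L\) and that the decomposition of \(\sigma\) into three independent partial bijections is valid.
\item The elements of \(U\) must be distinct, which makes "subset" meaningful. This holds by the perturbation convention of Remark~\ref{rem:ties}.
\end{itemize}
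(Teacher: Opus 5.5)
Your proof is correct and follows the same symmetry argument as the paper. The paper just notes that all $(M+L)$-subsets of $U$ are equally likely to occupy the positions $[M]\cup W$, and that, conditioned on $H$, all assignments of its elements to these positions are equally likely; your explicit counts $(M+L)!\,(N-M-L)!$ and $M!\,L!\,(N-M-L)!$ make exactly this rigorous.
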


\begin{proof}
All \((M+L)\)-subsets of \(U\) are equally likely to occupy the positions
\([M]\cup W\), so \(H\) is uniform. Conditioned on \(H\), all assignments of
the elements of \(H\) to these positions are equally likely. Hence the subset
occupying the first \(M\) positions is a uniformly random \(M\)-subset of
\(H\).
\end{proof}

Thus, after conditioning on the points appearing in the prefix and in a later
time window, the only remaining randomness is a uniformly random split of
these points into \(M\) sample points and \(L\) window points. The following
definition encodes this split in rank space.

\begin{definition}[Gaps]\label{def:gaps}
Fix a set \(H\) of \(M+L\) distinct points and a subset \(S\subseteq H\) of
size \(M\). Sort the elements of \(H\) in increasing order, encode every point
of \(S\) by \(1\), and encode every point of \(H\setminus S\) by \(0\).
Suppose the ones occur at positions
$
    1\le i_1<i_2<\cdots<i_M\le M+L.
$
The associated gaps are
$
    g_0=i_1-1,
$
$ g_r=i_{r+1}-i_r-1$ for $r=1,\ldots,M-1,$ and $g_M=M+L-i_M.$
Thus \(g_r\) counts the number of non-sample points between consecutive sample
points, with \(g_0\) and \(g_M\) denoting the two boundary gaps. In particular,
\[
    \sum_{r=0}^M g_r=L.
\]
\end{definition}

\begin{lemma}\label{lem:exchangeable-gaps}
In the setting of Proposition~\ref{prop:random-order-sampling}, condition on
the set \(H\). Then the gap vector
$
    (g_0,g_1,\ldots,g_M)
$
is uniformly distributed over all weak compositions of \(L\) into \(M+1\)
parts. In particular, it is exchangeable.
\end{lemma}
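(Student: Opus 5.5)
By Proposition~\ref{prop:random-order-sampling}, once we condition on \(H\), the set \(S\) is a uniformly random \(M\)-subset of \(H\). Since the points of \(H\) are distinct, they have a well-defined increasing order. The encoding of Definition~\ref{def:gaps} therefore identifies \(S\) with a binary string of length \(M+L\) containing exactly \(M\) ones, namely the indicator of which sorted positions belong to \(S\). This correspondence between \(M\)-subsets of \(H\) and such strings is a bijection. Hence, conditioned on \(H\), the string is uniformly distributed over all \(\binom{M+L}{M}\) binary strings with \(M\) ones.

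Next I would show that the map from such strings to gap vectors \((g_0,\ldots,g_M)\) is a bijection onto the set of weak compositions of \(L\) into \(M+1\) parts. By Definition~\ref{def:gaps}, every gap is nonnegative, because the positions \(i_r\) are strictly increasing and lie in \([1,M+L]\). The gaps also sum to \(L\), as already recorded in the definition. For the inverse map, take any weak composition \((g_0,\ldots,g_M)\) of \(L\). Write \(g_0\) zeros, then a one, then \(g_1\) zeros, then a one, and so on, ending with \(g_M\) zeros. The resulting string has length \(L+M\) and exactly \(M\) ones, and its gaps are the given composition. Equivalently, the positions are recovered by
\[
i_r=r+\sum_{q<r}g_q .
\]
The two maps are mutually inverse, so the bijection holds. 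As a consistency check, both sets have cardinality \(\binom{M+L}{M}\).

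A uniform distribution pushed forward through a bijection is again uniform. So, conditioned on \(H\), the gap vector is uniform over the weak compositions of \(L\) into \(M+1\) parts. Exchangeability then follows directly. For any permutation \(\sigma\) of \(\{0,\ldots,M\}\), permuting the coordinates of a weak composition gives another weak composition of \(L\) into \(M+1\) parts, and this permutation of coordinates is a bijection on that set. The uniform measure is therefore invariant under \(\sigma\).

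No step here is a real obstacle. The only care needed is the boundary gaps \(g_0\) and \(g_M\), which must be shown to fit the same composition structure as the interior gaps. The explicit inverse map above handles them uniformly with the rest.
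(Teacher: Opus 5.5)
Your proposal is correct and follows essentially the same route as the paper: conditioned on \(H\), the sample is a uniform \(M\)-subset, the stars-and-bars bijection carries it to a uniform weak composition of \(L\) into \(M+1\) parts, and exchangeability follows from the invariance of that set under coordinate permutations. Your explicit inverse map \(i_r=r+\sum_{q<r}g_q\) just spells out the bijection that the paper cites as standard.
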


\begin{proof}
Conditioned on \(H\), the sample \(S\) is a uniformly random \(M\)-subset of
\(H\). Equivalently, the positions
$
    1\le i_1<i_2<\cdots<i_M\le M+L
$
of the sample points form a uniformly random \(M\)-subset of
\(\{1,\ldots,M+L\}\). The map
\[
    (i_1,\ldots,i_M)\mapsto(g_0,\ldots,g_M)
\]
is the standard stars-and-bars bijection between \(M\)-subsets of
\([M+L]\) and weak compositions of \(L\) into \(M+1\) parts. Hence the gap
vector is uniform over all such compositions. Since this distribution is
invariant under permutations of its coordinates, the gaps are exchangeable.
\end{proof}

The empirical intervals used by the algorithm correspond to blocks of
consecutive sample gaps. By exchangeability, the distribution of the number of
window points in such a block depends only on the number of gaps it contains,
and not on its position in rank space.

\begin{lemma}\label{lem:gap-sum-concentration}
For every constant \(C_0>0\), there exists a constant \(c>0\) such that the
following holds. Let \((g_0,\ldots,g_M)\) be the gap vector induced by a
uniformly random \(M\)-subset of a set of size \(M+L\), where
$
    L\le C_0M.
$
For every set \(Q\subseteq\{0,\ldots,M\}\) of size \(q\ge1\), define
\[
    G_Q=\sum_{r\in Q}g_r,
    \qquad
    \mu_Q=\frac{Lq}{M+1}.
\]
Then, for every \(a>0\),
\[
    \Pr\!\left[
        |G_Q-\mu_Q|\ge a
    \right]
    \le
    2\exp\left(
        -c\min\left\{
            \frac{a^2}{\mu_Q+1},
            a
        \right\}
    \right).
\]
\end{lemma}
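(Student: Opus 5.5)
By \cref{lem:exchangeable-gaps}, the law of \(G_Q\) depends only on \(q=|Q|\), so we may take \(Q=\{0,1,\ldots,q-1\}\). The plan is to identify \(G_Q\) with a hypergeometric-type count and then invoke a standard Bernstein/Chernoff bound for sampling without replacement.

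In the binary-string encoding of \cref{def:gaps}, \(G_Q=g_0+\cdots+g_{q-1}\) is the number of zeros preceding the \(q\)-th one. Write \(T\) for the position of the \(q\)-th one, so that \(G_Q=T-q\). For an integer \(k\), the event \(\{G_Q\ge k\}\) says that the first \(q+k-1\) positions contain at most \(q-1\) ones. The number of ones among the first \(p\) positions of a uniformly random binary string with \(M\) ones and \(L\) zeros is hypergeometric, \(Z_p\sim\mathrm{Hyp}(M+L,M,p)\). Hence
\[
  \Pr[G_Q\ge k]=\Pr[Z_{q+k-1}\le q-1],
  \qquad
  \Pr[G_Q\le k]=\Pr[Z_{q+k}\ge q].
\]
This reduces both tails to tails of hypergeometric variables, with the standard fact that the hypergeometric satisfies the same Chernoff/Bernstein bounds as the binomial \(\mathrm{Bin}(p,M/(M+L))\) (Hoeffding's comparison).

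For the upper tail, take \(k=\mu_Q+a\) and \(p=q+k-1\). Then
\[
  \mathbb E[Z_p]=p\cdot\frac{M}{M+L}=\frac{(q+\mu_Q+a-1)M}{M+L}.
\]
Since \(\mu_Q=Lq/(M+1)\approx Lq/M\), we get \(q+\mu_Q\approx q(M+L)/M\), so \(\mathbb E[Z_p]\approx q+aM/(M+L)\). The required deviation below the mean is therefore about \(aM/(M+L)\ge a/(1+C_0)\), using \(L\le C_0M\).

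The small discrepancies, namely \(M+1\) versus \(M\) and the \(\pm1\)'s, are \(O(1+\mu_Q/M)\). They are absorbed by assuming \(a\ge a_0\) for a suitable constant \(a_0\); for \(a<a_0\) the claimed bound exceeds \(1\) once \(c\) is small, so it holds trivially.

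The Bernstein bound gives
\[
  \Pr[Z_p\le \mathbb E Z_p-\lambda]\le \exp\!\left(-\frac{\lambda^2}{2(\sigma^2+\lambda/3)}\right),
  \qquad
  \sigma^2\le p\cdot\frac{L}{M+L}.
\]
Here \(pL/(M+L)\approx \mu_Q+aL/(M+L)\le \mu_Q+a\). With \(\lambda\asymp a\), the exponent is
\[
  \gtrsim \frac{a^2}{\mu_Q+a}\ \ge\ \tfrac12\min\{a^2/\mu_Q,\,a\},
\]
and this is comparable to \(\min\{a^2/(\mu_Q+1),a\}\).

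The lower tail is handled symmetrically. Take \(k=\mu_Q-a\); if this is negative the probability is \(0\). Use \(p=q+k\), so that \(\mathbb E Z_p\approx q-aM/(M+L)\), and apply the upper Bernstein tail with variance at most \(\mu_Q\). Adding the two tails gives the factor \(2\).

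The main obstacle is bookkeeping rather than ideas. First, the variance proxy must be controlled by \(\mu_Q+a\) and not by \(q\); this matters when \(L\ll M\), and it works because the variance carries the factor \(L/(M+L)\). Second, \(c\) must depend only on \(C_0\): the dependence enters through the ratio \(M/(M+L)\ge 1/(1+C_0)\), which converts a deviation in \(G_Q\) into a deviation in the count of ones.

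An alternative I would fall back on, if the hypergeometric comparison gets messy, is the representation of the uniform weak composition as i.i.d.\ geometric variables \(\mathrm{Geom}\) with mean \(L/(M+1)\), conditioned on their total equalling \(L\). This conditioning event has probability \(\Omega(1/\sqrt{L})\), and the resulting polynomial loss is too lossy for small \(a\), so I prefer the hypergeometric route.
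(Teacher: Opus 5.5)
Your proposal is correct and follows essentially the paper's proof: reduce by exchangeability to $Q=\{0,\ldots,q-1\}$, express both tails of $G_Q$ as tails of a hypergeometric prefix count, and apply Bernstein with deviation at least $aM/(M+L)\ge a/(1+C_0)$ and variance $O(\mu_Q+a)$ (the paper counts zeros among the first $k$ positions where you count ones, which is equivalent). The differences are cosmetic: the paper computes the deviation exactly as $(Ma+L-\mu_Q)/(M+L)$ for the upper tail and $(Ma+\mu_Q)/(M+L)$ for the lower tail, so rounding only helps and your $a\ge a_0$ device is unnecessary; it also handles separately the degenerate case $q=M+1$, where $G_Q=L=\mu_Q$ deterministically and which your ``zeros before the $q$-th one'' identification does not cover.
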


We defer the proof of this Lemma to the Appendix~\ref{App:ConcentrationProof}.

\begin{corollary}\label{cor:gap-sum-high-probability}
For every \(B,C_0>0\), there exist constants \(c_0,C>0\),
depending only on \(B\) and \(C_0\), such that the following holds. Under the
assumptions of Lemma~\ref{lem:gap-sum-concentration}, if
$
    \mu_Q\ge c_0\log^2 n,
$
then
\[
    \Pr\!\left[
        |G_Q-\mu_Q|
        \ge C\sqrt{\mu_Q\log n}
    \right]
    \le n^{-B}.
\]
\end{corollary}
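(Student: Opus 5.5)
The plan is to apply Lemma~\ref{lem:gap-sum-concentration} with the threshold $a=C\sqrt{\mu_Q\log n}$ and then check that the minimum in the exponent is at least $(B/c)\log n$ once $C$ and $c_0$ are chosen large enough. Here $c=c(C_0)$ is the constant given by that lemma. Everything reduces to comparing the two branches of the minimum, and the hypothesis $\mu_Q\ge c_0\log^2 n$ is what makes both branches large.

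\textbf{Quadratic branch.} Since $\mu_Q\ge c_0\log^2 n\ge 1$ for $n\ge 3$, we have $\mu_Q+1\le 2\mu_Q$. Hence
\[
\frac{a^2}{\mu_Q+1}\ \ge\ \frac{C^2\mu_Q\log n}{2\mu_Q}\ =\ \frac{C^2}{2}\log n .
\]

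\textbf{Linear branch.} Using $\mu_Q\ge c_0\log^2 n$,
\[
a\ =\ C\sqrt{\mu_Q\log n}\ \ge\ C\sqrt{c_0}\,\log^{3/2} n\ \ge\ C\sqrt{c_0}\,\log n \qquad (n\ge 3).
\]
This is the only place the hypothesis $\mu_Q\ge c_0\log^2 n$ is used. In the regime $a\lesssim \mu_Q$ the bound is sub-Gaussian, and the hypothesis keeps us away from the Poisson-tail regime where the linear term would dominate. In fact any lower bound $\mu_Q\gtrsim\log n$ would suffice, so $\log^2 n$ is more than enough.

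\textbf{Choosing constants.} Take $c_0=1$ and choose $C$ with $\min\{C^2/2,\ C\}\cdot c\ge B+1$, for example $C=\max\{\sqrt{2(B+1)/c},\ (B+1)/c\}$. Lemma~\ref{lem:gap-sum-concentration} then gives
\[
\Pr\!\left[\,|G_Q-\mu_Q|\ge C\sqrt{\mu_Q\log n}\,\right]\ \le\ 2e^{-(B+1)\log n}\ \le\ n^{-B}\qquad\text{for } n\ge 2,
\]
with $\log$ interpreted as natural log. If $\log$ means base $2$, the conversion changes only constants. Small values of $n$ can be absorbed into the constants, as elsewhere in the paper.

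There is no genuine obstacle here. The only points needing care are bookkeeping: $c$ depends on $C_0$ through the lemma, so $C$ and $c_0$ depend on both $B$ and $C_0$, as the statement allows. The real work sits in the deferred proof of Lemma~\ref{lem:gap-sum-concentration}, which we take as given.
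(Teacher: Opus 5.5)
Your proof is correct and is essentially the paper's argument: apply Lemma~\ref{lem:gap-sum-concentration} with $a=C\sqrt{\mu_Q\log n}$ and check that the minimum in the exponent is at least a large multiple of $\log n$. The only cosmetic difference is that you lower-bound both branches of the minimum directly, while the paper first takes $c_0$ large to get $a\le\mu_Q$, so that the quadratic branch is the binding one; your version shows that $c_0=1$ already suffices.
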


\begin{proof}
Apply \cref{lem:gap-sum-concentration} with
$
    a=C\sqrt{\mu_Q\log n}.
$
If \(\mu_Q\ge c_0\log^2 n\), then, after choosing \(c_0\) sufficiently
large, \(a\le\mu_Q\). Hence
\[
    \min\left\{
        \frac{a^2}{\mu_Q+1},
        a
    \right\}
    \ge C'\log n,
\]
where \(C'\) can be made arbitrarily large by increasing \(C\).
Choosing \(C\) sufficiently large as a function of \(B\) and \(C_0\)
gives the claim.
\end{proof}

\paragraph{Regularity of empirical rank intervals.}

We now translate the gap concentration estimate into the form used by the
algorithm. Importantly, the window whose load we estimate need not immediately
follow the sample prefix.

Fix a set \(U\) of \(n\) distinct points revealed in uniformly random order
$
    u_1,\ldots,u_n.
$
For integers \(M,L\), let
$
    S=\{u_1,\ldots,u_M\}
$
be the prefix sample, and let
$
    W\subseteq\{M+1,\ldots,n\},
    \ |W|=L,
$
be a deterministic future time window. Write the points of \(S\) in increasing
order as
$
    y_1<y_2<\cdots<y_M.
$

Consider an interval \(J\) whose endpoints are determined by fixed sample
ranks; that is, \(J\) is of the form
$
    [y_a,y_b),
$
up to the two boundary cases, where the rank indices \(a,b\) are fixed in
advance. Let \(Q(J)\) denote the corresponding set of consecutive sample gaps,
and define
\[
    q(J)=|Q(J)|,
    \qquad
    \mu(J)=\frac{Lq(J)}{M+1}.
\]
Finally, let \(N(W,J)\) denote the number of arrivals with time indices in
\(W\) whose values lie in \(J\).

\begin{lemma}\label{lem:rank-window-regularity}
For every \(\alpha,C_0>0\), there exist constants \(c_0,C>0\) such
that the following holds. Suppose that
$
    L\le C_0M,
$
and let \(\mathcal J\) be any family of at most \(n\) intervals of the form
described above. If
$
    \mu(J)\ge c_0\log^2 n
    \text{ for every }J\in\mathcal J,
$
then, with probability at least \(1-n^{-\alpha}\), simultaneously for every
\(J\in\mathcal J\),
\[
    |N(W,J)-\mu(J)|
    \le
    C\sqrt{\mu(J)\log n}.
\]
\end{lemma}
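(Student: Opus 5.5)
We reduce to the conditional setting of Proposition~\ref{prop:random-order-sampling} and Lemma~\ref{lem:exchangeable-gaps}, and then apply Corollary~\ref{cor:gap-sum-high-probability} together with a union bound over \(\mathcal J\).

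First, set \(H=S\cup F\), where \(F=\{u_t:t\in W\}\), and condition on \(H\). By Proposition~\ref{prop:random-order-sampling}, \(S\) is a uniformly random \(M\)-subset of \(H\). Hence, by Lemma~\ref{lem:exchangeable-gaps}, the gap vector \((g_0,\ldots,g_M)\) of Definition~\ref{def:gaps} is uniform over weak compositions of \(L\) into \(M+1\) parts. Since all points are distinct, the gap sums determine the window loads exactly. Fix \(J\in\mathcal J\) of the form \([y_a,y_b)\), with the boundary conventions \([0,y_b)\) and \([y_a,1]\). A window point \(u\in F\) lies in \(J\) if and only if, in the sorted order of \(H\), it sits after the \(a\)-th one and before the \(b\)-th one. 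Equivalently, it falls into one of the gaps \(g_a,\ldots,g_{b-1}\); for the boundary intervals, the gaps \(g_0\) or \(g_M\) are included as appropriate. The half-open convention assigns the sample point \(y_a\) itself to \(J\), but sample points are not window arrivals, so they do not matter. Thus \(N(W,J)=G_{Q(J)}\) exactly, where \(Q(J)\) is a deterministic set of consecutive gap indices determined by the fixed ranks \(a,b\). In particular, \(Q(J)\) does not depend on the realization, and \(\mu(J)=\mu_{Q(J)}\).

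Second, for each fixed \(J\) with \(\mu(J)\ge c_0\log^2 n\), we apply Corollary~\ref{cor:gap-sum-high-probability} with \(B=\alpha+1\) and the given \(C_0\). Its hypothesis \(L\le C_0M\) holds by assumption. This yields constants \(c_0,C\) depending only on \(\alpha,C_0\) such that
\[
\Pr\bigl[|N(W,J)-\mu(J)|\ge C\sqrt{\mu(J)\log n}\,\bigm|\,H\bigr]\le n^{-\alpha-1}.
\]
This bound is uniform in \(H\), so it also holds unconditionally after averaging over \(H\).

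Third, we take a union bound over the at most \(n\) intervals of \(\mathcal J\), which gives a total failure probability of at most \(n\cdot n^{-\alpha-1}=n^{-\alpha}\).

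The only step requiring care is the identification \(N(W,J)=G_{Q(J)}\) with a \emph{deterministic} index set \(Q(J)\). The subtle point is that the interval \(J\) is random, because its endpoints are sample order statistics. What keeps \(Q(J)\) fixed is that \(J\) is specified by fixed ranks \(a,b\), so the gap set it covers is fixed even though \(J\) moves. The remaining work is bookkeeping: checking the boundary intervals and the off-by-one between rank \(a\) and gap index \(a\).
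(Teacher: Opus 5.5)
Your proposal is correct and follows essentially the same route as the paper: condition on $H=S\cup F$, use Proposition~\ref{prop:random-order-sampling} and Lemma~\ref{lem:exchangeable-gaps} to get uniformly distributed gaps, identify $N(W,J)=G_{Q(J)}$ with a deterministic index set, apply Corollary~\ref{cor:gap-sum-high-probability}, and take a union bound over $\mathcal J$. The differences are cosmetic. You take $B=\alpha+1$ where the paper takes $B=\alpha+2$, and both suffice. You also spell out the rank-to-gap-index bookkeeping ($[y_a,y_b)\leftrightarrow g_a,\ldots,g_{b-1}$, plus the boundary cases), which the paper simply asserts.
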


\begin{proof}
Fix \(J\in\mathcal J\), and let \(Q=Q(J)\). Define
\[
    F=\{u_t:t\in W\},
    \qquad
    H=S\cup F.
\]
By Proposition~\ref{prop:random-order-sampling}, \(H\) is a uniformly random
\((M+L)\)-subset of \(U\), and, conditioned on \(H\), the prefix sample \(S\)
is a uniformly random \(M\)-subset of \(H\).

Sort the elements of \(H\), encode the points of \(S\) by ones and the points
of \(F\) by zeros, and let
$
    (g_0,\ldots,g_M)
$
be the resulting gap vector. By Lemma~\ref{lem:exchangeable-gaps}, this vector
is uniformly distributed over the weak compositions of \(L\) into \(M+1\)
parts. Moreover,
\[
    N(W,J)=\sum_{r\in Q}g_r.
\]
Applying \cref{cor:gap-sum-high-probability} with
\(B=\alpha+2\), we obtain
\[
    \Pr\!\left[
        |N(W,J)-\mu(J)|
        >
        C\sqrt{\mu(J)\log n}
    \right]
    \le
    n^{-(\alpha+2)}
\]
for a sufficiently large constant \(C\). Since
$
    |\mathcal J|\le n,
$
a union bound over all \(J\in\mathcal J\) gives total failure probability at
most
$
    n^{-(\alpha+1)}
    \le
    n^{-\alpha}.
$
\end{proof}

\paragraph{The global regularity event.}

We now collect the regularity estimates needed in the invariant proof. It is
convenient to index all phases in their order of creation. Let
\[
    \Psi_0=\Phi_0,
    \qquad
    \Psi_j=\Phi^\uparrow_j
    \quad (1\le j\le K),
\]
\[
    \Psi_{K+1}=\Phi^{\mathrm{mid}},
    \qquad
    \Psi_{K+1+j}=\Phi^\downarrow_j
    \quad (1\le j\le K).
\]
Let \(u_r\) denote the creation time of \(\Psi_r\), so that
\[
    u_0=0,
    \qquad
    u_j=t^\uparrow_j
    \quad (1\le j\le K),
\]
\[
    u_{K+1}=t^{\mathrm{mid}},
    \qquad
    u_{K+1+j}=t^\downarrow_j
    \quad (1\le j\le K).
\]
For each phase \(\Psi_r\), let \(M_r\) denote the prefix length whose order
statistics determine its partition:
\[
    M_r=
    \begin{cases}
        u_r, & 1\le r\le K,\\
        u_K, & K+1\le r\le 2K+1.
    \end{cases}
\]
Thus every ascent partition is defined by its own prefix, whereas the middle
and descent partitions are coarsenings of the final ascent partition.

For \(1\le r\le2K\), define the one-step window
$
    W_r^{(1)}
    =
    \{u_r+1,\ldots,u_{r+1}\}.
$
Finally, define
$
    W_{\mathrm{fin}}
    =
    \{u_{2K+1}+1,\ldots,n-1\}
$. For a prefix length \(M\), a deterministic future window
$
    W\subseteq\{M+1,\ldots,n\},
$
and a family \(\mathcal J\) of intervals defined by deterministic rank
endpoints of the prefix, let \(q_M(J)\) denote the number of prefix gaps
corresponding to \(J\), and set
\[
    \mu_M(W,J)
    =
    \frac{|W|q_M(J)}{M+1}.
\]

Fix \(\alpha>0\). Apply Lemma~\ref{lem:rank-window-regularity} with failure
exponent \(\alpha+2\) and with a sufficiently large absolute constant \(C_0\)
that dominates the ratio between the length of every relevant window and the
corresponding prefix length. Let \(c_0\) and \(C_{\mathrm{reg}}\) be the
resulting constants. We write
$
    \operatorname{Reg}(M,W,\mathcal J)
$
for the event that, simultaneously for every \(J\in\mathcal J\),
\[
    |N(W,J)-\mu_M(W,J)|
    \le
    C_{\mathrm{reg}}
    \sqrt{\mu_M(W,J)\log n}.
\]
The global regularity event \(\mathcal E_{\mathrm{reg}}\) is the intersection
of the following events:

\begin{enumerate}
    \item[\textnormal{(R1)}]
    For every \(1\le r\le2K\),
    \[
        \operatorname{Reg}
        \bigl(
            M_r,
            W_r^{(1)},
            \mathcal I(\Psi_r)
        \bigr).
    \]
    These events control the load of each phase until the next update. During
    the ascent and middle phase, they also ensure that residual cells inherited
    from the preceding phase are filled during this window.

    \item[\textnormal{(R2)}]
    For every \(K+2\le r\le2K\),
    \[
        \operatorname{Reg}
        \bigl(
            M_{r-1},
            W_r^{(1)},
            \mathcal I(\Psi_{r-1})
        \bigr).
    \]
    These events control the original child intervals of residual cells
    inherited during the descent.

    \item[\textnormal{(R3)}]
    For the final descent phase,
    \[
        \operatorname{Reg}
        \bigl(
            M_{2K},
            W_{\mathrm{fin}},
            \mathcal I(\Psi_{2K})
        \bigr).
    \]
    This is the corresponding child-interval regularity event for the final
    period.
\end{enumerate}






\begin{lemma}\label{lem:global-regularity}
For every fixed \(\alpha>0\), after choosing \(D\) sufficiently large as a
function of \(\alpha\), the event \(\mathcal E_{\mathrm{reg}}\) satisfies
$
    \Pr[\mathcal E_{\mathrm{reg}}]
    \ge
    1-n^{-\alpha}.
$
\end{lemma}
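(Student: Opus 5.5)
The proof applies Lemma~\ref{lem:rank-window-regularity} once to each of the $O(K)=O(\log n)$ events in (R1)--(R3), each with failure exponent $\alpha+2$, and finishes with a union bound. For each event $\operatorname{Reg}(M,W,\mathcal J)$ three things must be checked. First, $W\subseteq\{M+1,\ldots,n\}$ is deterministic and $\mathcal J$ is a family of at most $n$ intervals with rank endpoints of the prefix fixed in advance. Second, $|W|\le C_0 M$ for an absolute constant $C_0$. Third, $\mu_M(W,J)\ge c_0\log^2 n$ for every $J\in\mathcal J$.

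The first condition follows from the construction. The update times of Definition~\ref{def:time-grid} are deterministic, and the cutoff ranks $r_i^{(j)}=\lfloor i t^\uparrow_j/4^j\rfloor$ are fixed numbers. Every interval of the middle and descent partitions is a union of consecutive final-ascent intervals, so it is also of the form $[y_a,y_b)$ with fixed $a,b$ relative to the prefix of length $u_K$. The boundary intervals $I_1$ and $I_{4^j}$ use the boundary gaps $g_0$ and $g_M$; this is allowed because $Q(J)$ may contain them. Each family has at most $4^K\le n$ members.

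For the second and third conditions I would compute $|W|$ and $q$ scale by scale. In the ascent ($1\le r\le K-1$), $M_r=(1-\eta)\tau_{K-r+1}$ and $|W_r^{(1)}|=(1-\eta)(\tau_{K-r}-\tau_{K-r+1})=3M_r$. Each interval has $q\approx M_r/4^r$ gaps, so $\mu\approx 3M_r/4^r=\tfrac34(1-\eta)\tau_K\ge\tfrac12 D\log^2 n$.

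For the middle and descent windows ($r\ge K$), the prefix is fixed at $M=u_K=(1-\eta)n/8$. The windows are all at most $n$, so their ratio to $M$ is bounded by an absolute constant. The number of gaps per interval is $q\approx 4^{r-K-1}\cdot M/4^K$ after $r-K-1$ merges. The descent window lengths are $|W|=(1+4\eta)(\tau_{j}-\tau_{j+1})\pm O(\eta\tau)$, up to small corrections at the middle transition, and they shrink by a factor $4$ per step while $q$ grows by $4$. The product $|W|q/M$ therefore stays $\Theta(\tau_K)=\Theta(D\log^2 n)$.

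The one place that needs care is the window $W_{\mathrm{fin}}$ in (R3) and the child-interval events in (R2). In (R2) the window belongs to phase $r$ but the intervals are the finer ones of $\Psi_{r-1}$, whose $q$ is a quarter as large. There I would check that $\mu$ is still $\Omega(\tau_K)$. For example, $|W_{\mathrm{fin}}|\approx(1+4\eta)\tau_K$ and the intervals of $\Psi_{2K}$ have $q\approx M/4$, which gives $\mu\approx\tau_K/4$. Choosing $D\ge 8c_0$ then guarantees $\mu_M(W,J)\ge c_0\log^2 n$ in every case. The constants $c_0,C_{\mathrm{reg}}$ depend only on $\alpha$ and $C_0$, and not on $D$, so there is no circularity. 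I expect this verification of the lower bound on $\mu$ across all three families, including the $\pm O(1)$ rounding and the $\eta$-dependent window boundaries, to be the main obstacle. It is bookkeeping rather than a conceptual difficulty.

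Given these conditions, Lemma~\ref{lem:rank-window-regularity} bounds each event's failure probability by $n^{-(\alpha+2)}$. There are at most $2K+(K-1)+1\le 3K+1=O(\log n)$ events, so a union bound gives total failure probability at most $O(\log n)\,n^{-(\alpha+2)}\le n^{-\alpha}$ for $n$ large. Small $n$ is absorbed into the constants, as stated in Definition~\ref{def:time-grid}.
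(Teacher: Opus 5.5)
Your proposal is correct and follows the paper's proof: you apply Lemma~\ref{lem:rank-window-regularity} with exponent $\alpha+2$ to each of the $O(\log n)$ events in (R1)--(R3), check $|W|\le C_0M$ and $\mu_M(W,J)=\Theta(\tau_K)\ge c_0\log^2 n$ by taking $D$ large, and finish with a union bound. You carry out explicitly the ``direct inspection of the time grid'' that the paper leaves implicit, including the factor-$1/4$ loss for the child intervals in (R2) and (R3) and the observation that $c_0,C_{\mathrm{reg}}$ do not depend on $D$; the one inaccuracy is cosmetic, since the descent windows have length exactly $(1+4\eta)(\tau_j-\tau_{j+1})$ with no $\pm O(\eta\tau)$ correction.
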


\begin{proof}
Every event appearing in the definition of
\(\mathcal E_{\mathrm{reg}}\) is an application of
Lemma~\ref{lem:rank-window-regularity}. By the definition of \(C_0\), every
relevant window satisfies
$
    |W|\le C_0M,
$
where \(M\) is the prefix length defining the corresponding partition. Moreover, direct inspection of the time grid shows that, for every
interval-window pair appearing in \textnormal{(R1)}--\textnormal{(R3)},
$
    \mu_M(W,J)=\Theta(\tau_K),
$
with absolute implicit constants. Since
$
    \tau_K=\Theta(D\log^2 n),
$
choosing \(D\) sufficiently large guarantees
$
    \mu_M(W,J)\ge c_0\log^2 n
$
for every relevant pair.

Hence every regularity event in
\textnormal{(R1)}--\textnormal{(R3)} fails with probability at most
$
    n^{-(\alpha+2)}.
$
There are \(O(K)=O(\log n)\) such events. Therefore, by a union bound,
\[
    \Pr[\mathcal E_{\mathrm{reg}}^c]
    \le
    O(\log n)n^{-(\alpha+2)}
    \le
    n^{-\alpha}
\]
for all sufficiently large \(n\).
\end{proof}

We now show that the global regularity event implies a good run. For the
remainder of the subsection, set
$
    b=3\tau_K/4,
    \ 
    \Delta=\eta\tau_K.
$
Since every relevant expectation is \(\Theta(\tau_K)\), there exists a
constant \(C_\Gamma>0\), depending only on \(C_{\mathrm{reg}}\) and the
absolute constants in the time grid, such that all deviations guaranteed by
\(\mathcal E_{\mathrm{reg}}\) are at most
$
    \Gamma
    :=
    C_\Gamma\sqrt{\tau_K\log n}.
$
By choosing \(D\) and \(\beta\) sufficiently large, we may assume throughout
that
$
    \Gamma\le{\Delta}/{100}
    \text{ and }
    \eta\le 1/{100}.
$
All \(O(1)\) errors caused by the suppressed rounding of update times and rank
widths are absorbed into \(\Gamma\).

\begin{lemma}\label{lem:regularity-implies-invariant}
We have
$
    \mathcal E_{\mathrm{reg}}
    \subseteq
    \mathcal G.
$
\end{lemma}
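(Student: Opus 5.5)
The plan is to prove \(\mathcal E_{\mathrm{reg}}\subseteq\mathcal G\) by induction over the update times \(u_1<u_2<\cdots<u_{2K+1}\). Fix an execution on \(\mathcal E_{\mathrm{reg}}\). The inductive hypothesis at update \(r\) is exactly the relevant item of Definition~\ref{def:main-invariant}: the phase \(\Psi_{r-2}\) is full, and every bucket of \(\Psi_{r-1}\) has between \(1\) and \(O(\Delta)=O(\eta\tau_K)=O(\log^{3/2}n)\) empty cells. The base case is the first ascent update. Here the initial bucket has \(\tau_K\) cells and receives all \((1-\eta)\tau_{K}\) arrivals before \(t_1^\uparrow\), so it has exactly \(\eta\tau_K=\Delta\) empty cells; this is deterministic.

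\textbf{Structural step.} Given the hypothesis at update \(r\), Lemmas~\ref{lem:ascent-containment} and~\ref{lem:balanced-extension-feasibility} make the new phase \(\Psi_r\) well defined. Every empty cell of \(\Psi_{r-1}\) is then residual and charged to some \(J\in\mathcal I(\Psi_r)\), and all older phases are full. I would record the following monotonicity observation. During the window \(W_r^{(1)}\), an arrival \(x\) can only go to a residual cell whose eligibility interval contains \(x\), or to an ordinary cell of \(B_{\Psi_r}(J)\) with \(x\in J\). Because residual cells live in earlier-created buckets, they are used first.
\begin{itemize}
\item In the ascent and middle phases, the eligibility interval of the residual cells charged to \(J\) is \(J\) itself. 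So if \(N_J\) is the number of window arrivals in \(J\), the residual cells of \(J\) are exhausted once \(N_J\ge c(J)\). The bucket \(B_{\Psi_r}(J)\) then ends the window with exactly \(\max\{0,\ s(J)+c(J)-N_J\}\) empty cells, where \(s(J)+c(J)=b+\bar c\pm1\) and \(\bar c\) is the average charge.
\item In the descent phases, the residual cells keep the eligibility interval of their child \(I_r\). Event (R2) (or (R3) in the final period) shows that each child receives \(\Theta(\tau_K)\gg O(\Delta)\) arrivals, so all residual cells are filled. The same formula then holds with \(N_J\) the load of the merged interval.
\end{itemize}
Thus \(\Psi_{r-1}\) is full at the next update, and the count of empty cells in the buckets of \(\Psi_r\) is a deterministic expression in \(N_J\) and \(\bar c\).

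\textbf{Quantitative step.} Compute \(\mu=\mu_{M_r}(W_r^{(1)},J)\) from the time grid and use (R1) to get \(N_J=\mu\pm\Gamma\).
\begin{itemize}
\item For an ascent window, \(|W|=(1-\eta)\cdot3\tau_{K-j}\) and \(q/(M+1)\approx4^{-j}\), so \(\mu=(1-\eta)b\). The number of empty cells is then \(\eta b+\bar c\pm\Gamma\pm1\). Since \(\bar c\ge 0\) and \(\bar c=O(\Delta)\) by induction, this lies in \([\Delta/2,\,O(\Delta)]\) because \(\Gamma\le\Delta/100\).
\item For the middle and descent windows, I would compute \(\mu\) analogously. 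Here the window fraction is measured against the \(M_K\)-prefix ranks, so \(\mu\) includes a correction factor \(|W|/(M_K+1)\cdot q\) rather than \(|W|/4^{K-j}\) directly. I would then verify that \(\mu=b+\bar c-\Theta(\Delta)\). That is, the \(4\eta\) slack in \(t_j^{\downarrow}\), combined with the charge \(\bar c\) inherited from four children each carrying \(\Theta(\Delta)\) empty cells, leaves a positive \(\Theta(\Delta)\) margin.
\end{itemize}
Crucially, the error does not accumulate. Each new empty-cell count is recomputed as capacity minus a freshly concentrated load. The only inherited quantity is \(\bar c\), which enters identically for every \(J\) and is itself \(\Theta(\Delta)\pm\Gamma\). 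So it suffices to carry the two-sided bound \(\bar c\in[c_1\Delta,c_2\Delta]\) through the induction, with constants that are stable under one step.

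\textbf{Final item and main obstacle.} For item~4, event (R3) shows that every residual cell and every child of \(\Psi_{2K}\) is filled before time \(n-1\). Hence all buckets created before \(\Phi_K^\downarrow\) are full. Since exactly \(n\) cells were allocated and \(n-1\) elements have been placed, the single bucket of \(\Phi_K^\downarrow\) has exactly one empty cell. I expect the main obstacle to be the bookkeeping for the middle and descent windows. There one must check precisely that the chosen times \(t^{\mathsf{mid}}\) and \(t_j^{\downarrow}=n-(1+4\eta)\tau_j\), together with expectations measured relative to the \(u_K\)-prefix, give a margin \(b+\bar c-\mu\) that is positive and of order \(\Delta\) at every step, rather than drifting. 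I would first write \(\bar c\) as an explicit affine function of \(\eta b\) at each step, and confirm that the recursion has a fixed point in \([\Delta/2,10\Delta]\) that absorbs the \(\pm\Gamma\) perturbations.
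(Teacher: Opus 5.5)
\textbf{Gap: the descent step does not close.} Your structural step is right and matches the paper's. Residual cells are consumed first by the earliest-bucket rule, each new bucket ends its window with $s(J)+c(J)-N_J$ empty cells, and balanced extension makes $s(J)+c(J)=b+\bar c\pm1$ the same for all $J$. The gap is in the quantitative step, exactly where you assert that errors do not accumulate.

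In a descent update, suppose the old buckets have average empty count $e$. Then each merged interval receives charge $4e$ on average, and one window later the new buckets have $b+4e-(1+4\eta)b=4e-3\Delta$ empty cells. The map $e\mapsto 4e-3\Delta$ has fixed point $\Delta$, but it is expanding. It sends $[c_1\Delta,c_2\Delta]$ into itself only if $c_1\ge1$ and $c_2\le1$.

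So no nondegenerate two-sided bound on $\bar c$ is stable under one step. Any $\pm\Gamma$ (or even $\pm O(1)$ rounding) error carried in $\bar c$ is multiplied by $4$ at each of the $\Theta(\log n)$ descent updates. This destroys both positivity (buckets would overflow) and the $O(\log^{3/2}n)$ upper bound. Likewise, your hypothesis ``between $1$ and $O(\Delta)$ empty cells'' only gives $\bar c\in[4,O(\Delta)]$ at a descent update, whereas you need $\bar c>3\Delta+\Gamma$. The ascent is fine, since there the map is $e\mapsto 3\Delta/4+e/4$, a contraction, and the middle map is the identity.

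\textbf{Missing idea: $\bar c$ is deterministic.} This is how the paper closes the induction. At each update, all phases before the immediately preceding one are full. So the residual cells are exactly all empty cells allocated so far, and their number is (cells allocated) minus (elements placed), both fixed by the time grid. This gives $\eta\tau_{K-j+1}$, $\eta\tau_0$ and $4\eta\tau_j$ residual cells, hence $\bar c=\Delta/4$, $\Delta$, $4\Delta$ for ascent, middle and descent, up to rounding. Equivalently, the deviations $N_J-\mu$ sum to zero over $J$, because each window has deterministic length. Every new bucket then has $b+\bar c-\mu\pm\Gamma=\Delta\pm\Gamma$ empty cells, with only a fresh one-step error.

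\textbf{The obstacle you flagged is harmless.} The ``correction factor'' for middle and descent windows causes no difficulty. Those intervals are unions of $4^{j}$ consecutive final-ascent intervals, and the rank cutoffs telescope. Hence $q/(M_K+1)=4^{j-K}\pm O(1/M_K)$, and $\mu=b$, respectively $(1+4\eta)b$, up to $O(1)$.
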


\begin{proof}[Proof sketch]
We prove the main invariant inductively over the phase updates. At each update,
the immediately preceding phase has only
$
    \Delta\pm O(\Gamma)
$
empty cells in each bucket. These cells become residual and are charged to
intervals of the newly created phase.

During the ascent and the middle phase, every residual group has the same
eligibility interval as some interval of the new phase. By
\textnormal{(R1)}, this interval receives \(\Omega(\tau_K)\) arrivals before
the next update, while the residual group contains only
$
    \Delta+O(\Gamma)=o(\tau_K)
$
cells. Since the algorithm always places an arriving element into the earliest
created bucket containing an eligible empty cell, all such residual cells are
filled before later-created fresh cells with the same eligibility interval are
used. During the descent, the corresponding conclusion follows from the
child-interval regularity events \textnormal{(R2)}, while the final period is
handled by \textnormal{(R3)}.

At the same time, the one-step regularity estimates determine the load received
by every interval of the newly created phase. The time grid is chosen so that
the effective capacity per interval is
$
    b+\frac{\Delta}{4},
    \ 
    b+\Delta,
    \text{ or }
    b+4\Delta,
$
for ascent, middle, and descent phases, respectively, where
$
    b=3\tau_K/4
    \text{ and }
    \Delta=\eta\tau_K.
$
The corresponding one-step loads are
$
    (1-\eta)b\pm O(\Gamma),
    \ 
    b\pm O(\Gamma),
    \ 
    (1+4\eta)b\pm O(\Gamma).
$
Using
$
    \eta b=3\Delta/4
    \text{ and }
    4\eta b=3\Delta,
$
we obtain in every case
$
    \Delta\pm O(\Gamma)
$
empty cells in each bucket of the new phase at the next update.

Thus, one update later, the new phase is almost full, while the preceding
phase is completely full. Choosing \(D\) and \(\beta\) sufficiently large
ensures
$
    \Gamma\ll\Delta=O(\log^{3/2}n),
$
and therefore all clauses of the main invariant hold. The full verification is
deferred to \cref{app:regularity-implies-invariant}.
\end{proof}

We can now prove Theorem~\ref{thm:random-order-sorting}.

\begin{proof}[Proof of Theorem~\ref{thm:random-order-sorting}]
Take \(\alpha=4\). By Lemma~\ref{lem:global-regularity},
$
    \Pr[\mathcal E_{\mathrm{reg}}]
    \ge
    1-n^{-4}.
$
By Lemma~\ref{lem:regularity-implies-invariant},
$
    \mathcal E_{\mathrm{reg}}
    \subseteq
    \mathcal G.
$
On \(\mathcal G\), Corollary~\ref{cor:invariant-implies-no-failure} guarantees
that Algorithm~\ref{alg:sorting} returns a well-defined array, and
Lemma~\ref{lem:cost-good-run} gives
$
    \cost(A)
    \le
    O(\log^2 n)\cdot\OPT(X).
$
Therefore,
\[
    \Pr\!\left[
        \cost(A)
        \le
        O(\log^2 n)\cdot\OPT(X)
    \right]
    \ge
    1-n^{-4},
\]
which proves the theorem.
\end{proof}

\section{Random-Order Online TSP }

We will use the following parameterized form of Theorem \ref{thm:random-order-sorting}, in which the
failure probability and competitive ratio are measured with respect to an
ambient parameter \(N\), which may be larger than the size of the sorting
instance.

\begin{theorem}\label{thm:sorting-parameterized}
For every fixed constant \(A>0\), there exist constants \(C_A,D_A>0\) such
that the following holds. Let \(N\ge 2\), and let \(m\le N\) satisfy
$
    m\ge D_A\log^2 N.
$
Then there is an online algorithm for every random-order online sorting
instance \(Y\) of size \(m\) such that, with probability at least
\(1-N^{-A}\), the algorithm returns an ordering of cost at most
$
    C_A\log^2 N\cdot \OPT(Y).
$

\end{theorem}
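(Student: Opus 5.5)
We run Algorithm~\ref{alg:sorting} on the instance \(Y\) of size \(m\), but every occurrence of \(\log n\) in Definition~\ref{def:time-grid} is replaced by \(\log N\). Concretely, set
\[
\eta=\beta/\sqrt{\log N},\qquad
K=\left\lfloor \log_4\!\left(\frac{m}{2D\log^2 N}\right)\right\rfloor ,
\]
and define \(\tau_k=m/(2\cdot 4^k)\), so that \(D\log^2 N\le \tau_K<4D\log^2 N\). The hypothesis \(m\ge D_A\log^2 N\), with \(D_A\ge 2D\), ensures \(K\ge 0\) and that the grid is well defined. (If \(K=0\), there is only the initial bucket together with the middle and final phases, a degenerate case that is handled identically.) The whole proof then repeats the proof of Theorem~\ref{thm:random-order-sorting} with \(n\) replaced by \(N\) wherever it appears as a logarithmic or probability parameter, and with \(m\) used wherever it counts cells or arrivals.

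\emph{Deterministic part.} Lemmas~\ref{lem:bucket-size-formula}--\ref{lem:no-placement-failure} and Corollary~\ref{cor:invariant-implies-no-failure} go through verbatim, with the invariant thresholds \(c\log^{3/2}N\). The only inequalities they need are \(\tau_K=\Theta(\log^2 N)\gg\log^{3/2}N\), which holds once \(N\) is large. Lemma~\ref{lem:cost-good-run} then bounds the cost in three pieces. Every bucket has size \(O(\log^2 N)\), so by Theorem~\ref{thm:adversarial} each phase contributes in-bucket cost \(O(\log N)\cdot\OPT(Y)\). The number of phases is \(2K+2=O(\log m)=O(\log N)\), since \(m\le N\). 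Summing gives in-bucket cost \(O(\log^2 N)\OPT(Y)\), while the inter-bucket and inter-phase costs are \(O(\log N)\OPT(Y)\).

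\emph{Probabilistic part.} Lemma~\ref{lem:gap-sum-concentration} is scale-free, so it needs no change. In Corollary~\ref{cor:gap-sum-high-probability} we take \(a=C\sqrt{\mu_Q\log N}\) under the assumption \(\mu_Q\ge c_0\log^2 N\). This gives failure probability \(N^{-B}\), because \(a\le \mu_Q\) still holds and \(\min\{a^2/(\mu_Q+1),a\}\ge C'\log N\).

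Lemma~\ref{lem:rank-window-regularity} applies to families of at most \(m\le N\) intervals. Taking \(B=A+2\), a union bound gives failure probability at most \(N^{-(A+1)}\). In Lemma~\ref{lem:global-regularity}, every relevant pair again has \(\mu=\Theta(\tau_K)=\Theta(D\log^2 N)\), so choosing \(D=D(A)\) large makes \(\mu\ge c_0\log^2 N\). The number of events is \(O(K)=O(\log N)\), so \(\Pr[\mathcal E_{\mathrm{reg}}^c]\le O(\log N)N^{-(A+2)}\le N^{-A}\).

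Finally, Lemma~\ref{lem:regularity-implies-invariant} is deterministic given \(\mathcal E_{\mathrm{reg}}\). It only requires \(\Gamma=C_\Gamma\sqrt{\tau_K\log N}\le\Delta/100\) with \(\Delta=\eta\tau_K\). Since \(\Gamma/\Delta=C_\Gamma/(\beta\sqrt D)\) up to constants, this is arranged by taking \(\beta\) large as a function of \(A\). The constants \(C_A\) and \(D_A\) absorb \(D\), \(\beta\) and the small-\(N\) cases.

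The step I expect to need the most care is checking that no hidden use of \(\log n\) versus \(\log m\) breaks the argument. The concern is whether the time-grid identities in Lemma~\ref{lem:bucket-size-formula}, such as \(L/m=3\tau_K/4\) and the cell total summing to \(m\), together with the load calculations in the appendix, depend only on the ratios \(\tau_k/\tau_{k+1}=4\) and on \(\eta\) being small. They do not use \(m=n\). With \(\eta=\beta/\sqrt{\log N}\le 1/100\) and all counts scaled by \(m\), the phase loads remain \((1\pm O(\eta))b\pm O(\Gamma)\) exactly as before. Rounding errors are \(O(1)\), which is negligible against \(\Gamma\). So the invariant proof transfers, and combining the two parts as in the proof of Theorem~\ref{thm:random-order-sorting} yields the claim.
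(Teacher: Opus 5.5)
Your proposal is correct and is essentially the paper's own argument: rerun Algorithm~\ref{alg:sorting} with $\eta=\beta/\sqrt{\log N}$ and $\tau_K=\Theta(D\log^2 N)$, so that every regularity event has mean $\Theta(\log^2 N)$ and fails with probability polynomially small in $N$. Then union-bound over the $O(\log m)\le O(\log N)$ events and sum the cost over $O(\log N)$ phases. The one slip is that your exponents are inconsistent: each application of Lemma~\ref{lem:rank-window-regularity} with $B=A+2$ gives $N^{-(A+1)}$, but you then union-bound with $N^{-(A+2)}$. This is harmless, since $O(\log N)\,N^{-(A+1)}\le N^{-A}$ once $D_A$ is large enough to force $N\ge m\ge D_A\log^2 N$ to be large.
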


\begin{proof}
The proof is the same as that of Theorem \ref{thm:random-order-sorting}, with the parameters of the
algorithm chosen as functions of the ambient parameter \(N\). More precisely,
set
$
    \eta={\beta}/{\sqrt{\log N}}
$
and choose the final time scale \(\tau_K\) so that
$
    D\log^2 N\le \tau_K<4D\log^2 N.
$
The assumption \(m\ge D_A\log^2 N\), for a sufficiently large constant
\(D_A\), guarantees that the resulting time grid is well-defined.

Every concentration argument in Section~\ref{sec: Sorting} then has expectation
\(\Theta(\log^2 N)\) and error equal to
$
    O\!\left(\sqrt{\tau_K\log N}\right)
    =O(\log^{3/2}N).
$
By choosing \(D\) and \(\beta\) sufficiently large as functions of \(A\),
Lemma~\ref{lem:rank-window-regularity} gives failure probability at most \(N^{-(A+2)}\) for each relevant
regularity event. Since there are \(O(\log m)\le O(\log N)\) such events, the
same union-bound argument as in Lemma~\ref{lem:global-regularity} shows that the main invariant holds
with probability at least \(1-N^{-A}\).

On this event, every bucket has size \(O(\log^2 N)\). Hence the in-bucket cost
of each phase is \(O(\log N)\OPT(Y)\), and there are
\(O(\log m)\le O(\log N)\) phases. The proof of Lemma~\ref{lem:cost-good-run} therefore gives
total cost
$
    O(\log^2 N)\OPT(Y).
$
Absorbing the constants into \(C_A\) proves the theorem.
\end{proof}

Our approach is to reduce the multidimensional problem to the one-dimensional machinery developed earlier. The reduction is based on projecting points onto a suitable curve. Let \(\Gamma\subseteq [0,1]^d\) be a rectifiable curve, and let $\gamma:[0,\ell(\Gamma)]\to [0,1]^d$ be an arc-length parameterization of \(\Gamma\). For each point \(x\in [0,1]^d\), let
\[
    \pi_\Gamma(x)\in \arg\min_{y\in\Gamma}\|x-y\|
\]
be an arbitrary nearest point on \(\Gamma\), and let
$
    \theta_\Gamma(x)\in [0,\ell(\Gamma)]
$
be such that \(\gamma(\theta_\Gamma(x))=\pi_\Gamma(x)\). Thus
\(\theta_\Gamma(x)\) is the one-dimensional key of \(x\) induced by its
projection onto \(\Gamma\).

The usefulness of this projection is captured by the following deterministic
observation.

\begin{lemma}\label{lem:projection}
Let \(Y\subseteq [0,1]^d\) be a set of points, and let \(\Gamma\) be a rectifiable curve. Suppose an ordering \(\sigma\) of the points of \(Y\) has one-dimensional projection cost
\[
    \sum_{i=2}^{|Y|}
    \left|
        \theta_\Gamma(y_{\sigma(i)})
        -
        \theta_\Gamma(y_{\sigma(i-1)})
    \right|
    \le L.
\]
Then the cost of the same ordering in the original metric is at most
\[
    L+2\sum_{y\in Y}d(y,\Gamma).
\]
\end{lemma}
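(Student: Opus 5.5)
The plan is to bound each edge of the ordering separately by the triangle inequality through the projected points, and then sum. For consecutive points $y=y_{\sigma(i-1)}$ and $y'=y_{\sigma(i)}$ we have
\[
    \|y-y'\|
    \le
    \|y-\pi_\Gamma(y)\|
    +\|\pi_\Gamma(y)-\pi_\Gamma(y')\|
    +\|\pi_\Gamma(y')-y'\|.
\]
By the choice of $\pi_\Gamma$ as a nearest point, the outer terms equal $d(y,\Gamma)$ and $d(y',\Gamma)$.

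For the middle term, use the arc-length parameterization. Since $\pi_\Gamma(y)=\gamma(\theta_\Gamma(y))$ and $\pi_\Gamma(y')=\gamma(\theta_\Gamma(y'))$, the Euclidean distance between them is at most the length of the subarc of $\Gamma$ between parameters $\theta_\Gamma(y)$ and $\theta_\Gamma(y')$. Because $\gamma$ is parameterized by arc length, that subarc has length exactly $|\theta_\Gamma(y)-\theta_\Gamma(y')|$, so $\gamma$ is $1$-Lipschitz. This is the only step that uses anything about the curve, and it needs only rectifiability plus the arc-length parameterization, so it is not a real obstacle.

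Summing over $i=2,\ldots,|Y|$, the middle terms total at most $L$ by hypothesis. Each point $y\in Y$ is an endpoint of at most two edges of the path, so it contributes its distance $d(y,\Gamma)$ at most twice. Hence the total cost is at most $L+2\sum_{y\in Y}d(y,\Gamma)$. The only bookkeeping is this degree count: the first and last points of the ordering appear once and every interior point appears twice, which gives the factor $2$. The argument is entirely deterministic and uses no earlier results of the paper.
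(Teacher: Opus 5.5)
Your proposal is correct and follows the paper's proof: the triangle inequality through the projected points $\pi_\Gamma(y)$ and $\pi_\Gamma(y')$, the $1$-Lipschitz bound $\|\pi_\Gamma(y)-\pi_\Gamma(y')\|\le|\theta_\Gamma(y)-\theta_\Gamma(y')|$ from the arc-length parameterization, and summation over consecutive pairs. You also make explicit the degree count behind the factor $2$, which the paper leaves implicit.
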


\begin{proof}
For every consecutive pair in the ordering, the triangle inequality gives
\[
    \|y_{\sigma(i)}-y_{\sigma(i-1)}\|
    \le
    d(y_{\sigma(i)},\Gamma)
    +
    \|\pi_\Gamma(y_{\sigma(i)})-\pi_\Gamma(y_{\sigma(i-1)})\|
    +
    d(y_{\sigma(i-1)},\Gamma).
\]
Since \(\gamma\) is parameterized by arc length,
\[
    \|\pi_\Gamma(y)-\pi_\Gamma(y')\|
    \le
    |\theta_\Gamma(y)-\theta_\Gamma(y')|
\]
for every pair \(y,y'\). Summing over all consecutive pairs gives
\[
    \sum_{i=2}^{|Y|}
    \|y_{\sigma(i)}-y_{\sigma(i-1)}\|
    \le
    L+2\sum_{y\in Y}d(y,\Gamma),
\]
as claimed.
\end{proof}

We now describe the algorithm. Let $m_0=C_0\log^2 n$ for a sufficiently large constant \(C_0\), rounded to an integer. If \(n=O(m_0)\), the algorithm places all points arbitrarily; this incurs cost \(O(\log^2 n)\OPT(X)\), which is dominated by the claimed bound. Thus, assume below that \(n\) is sufficiently large compared to \(m_0\).

The algorithm first places the initial prefix of \(m_0\) points arbitrarily. It then proceeds in phases. At the beginning of a phase, suppose the first \(M\) points have already been placed. If more than \(m_0\) points remain, the algorithm handles the next $L=\min\{M,n-M\}$ points. Thus \(L\le M\), and except for the last possible leftover block, the prefix size doubles after each phase. If at most \(m_0\) points remain, the algorithm places them arbitrarily and stops.

\begin{algorithm}
\caption{\textsc{Random-Order Online TSP}}
\label{alg:tsp}
\begin{algorithmic}[1]
\Require Empty array \(A\) of size \(n\), input stream \(x_1,\ldots,x_n\).
\Ensure A placement of all points into \(A\), or \textsc{Failure}.

\State Let \(m_0=C_0\log^2 n\), rounded to an integer.
\If{\(n\le 2m_0\)}
    \State Place all points arbitrarily in \(A\).
    \State \Return \(A\).
\EndIf

\State Place the first \(m_0\) arriving points arbitrarily in
\(A[1],\ldots,A[m_0]\).
\State \(M\gets m_0\).

\While{\(n-M>m_0\)}
    \State \(L\gets \min\{M,n-M\}\).
    \State \(P\gets \{x_1,\ldots,x_M\}\).
    \State Compute a shortest Hamiltonian path \(\Gamma\) through \(P\).
    \State Let \(\gamma:[0,\ell(\Gamma)]\to[0,1]^d\) be an arc-length
    parameterization of \(\Gamma\).
    \State Initialize an instance of
    \textsc{Random-Order Sorting} on the block
    \(A[M+1],\ldots,A[M+L]\).

    \For{\(t=M+1,\ldots,M+L\)}
        \State Let \(\pi_\Gamma(x_t)\) be a nearest point to \(x_t\) on
        \(\Gamma\), and let
        \(\theta_\Gamma(x_t)\in[0,\ell(\Gamma)]\) be its position along
        \(\Gamma\).
        \If{\(\ell(\Gamma)>0\)}
            \State \(z_t\gets \theta_\Gamma(x_t)/\ell(\Gamma)\).
        \Else
            \State \(z_t\gets 0\).
        \EndIf
        \State Feed \(z_t\) to the one-dimensional sorting instance.
        \If{the sorting instance declares \textsc{Failure}}
            \State \Return \textsc{Failure}.
        \EndIf
        \State Place \(x_t\) in the corresponding cell of
        \(A[M+1],\ldots,A[M+L]\).
    \EndFor

    \State \(M\gets M+L\).
\EndWhile

\If{\(M<n\)}
    \State Place the remaining points \(x_{M+1},\ldots,x_n\) arbitrarily in
    the remaining cells.
\EndIf

\State \Return \(A\).
\end{algorithmic}
\end{algorithm}

We do not optimize computational efficiency here and allow the algorithm to
compute an exact shortest Hamiltonian path. Alternatively, one may use any
constant-factor Hamiltonian path through the prefix, changing only the
constants in the final competitive ratio.

We first record the easy length bound on the reference curve.

\begin{proposition}\label{prop:prefix-curve-length}
For every phase with prefix \(P\) and reference curve \(\Gamma\),
$
    \ell(\Gamma)\le \OPT(X).
$
\end{proposition}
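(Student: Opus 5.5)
The plan is a shortcutting argument: restrict an optimal Hamiltonian path of \(X\) to the prefix and use the triangle inequality. Let \(T^\ast\) be a shortest Hamiltonian path through \(X\), so its length is \(\OPT(X)\). The prefix \(P=\{x_1,\ldots,x_M\}\) is a sub-multiset of \(X\). Traversing \(T^\ast\) from one endpoint to the other, the points of \(P\) are met in some order \(p_1,p_2,\ldots,p_M\), with ties among equal copies broken arbitrarily. This order defines a Hamiltonian path \(\Gamma'\) through \(P\).

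For each \(i\), the segment \(p_ip_{i+1}\) replaces the subpath of \(T^\ast\) between the positions of \(p_i\) and \(p_{i+1}\). By the triangle inequality, \(\|p_i-p_{i+1}\|\) is at most the length of that subpath. These subpaths are edge-disjoint pieces of \(T^\ast\), since they correspond to disjoint consecutive stretches of the traversal. Summing over \(i\) therefore gives \(\ell(\Gamma')\le \ell(T^\ast)=\OPT(X)\).

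Since \(\Gamma\) is a shortest Hamiltonian path through \(P\), we get \(\ell(\Gamma)\le\ell(\Gamma')\le\OPT(X)\). If the algorithm instead uses a constant-factor approximate path, the same argument gives \(\ell(\Gamma)\le O(1)\cdot\OPT(X)\), which matches the remark after Algorithm~\ref{alg:tsp}.

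No step poses a real obstacle. The only point needing care is multisets: repeated copies of the same point sit at the same location on \(T^\ast\). They contribute zero-length segments and are handled by any consistent tie-breaking.
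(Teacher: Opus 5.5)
Your proposal is correct and takes the same route as the paper: shortcut an optimal Hamiltonian path on \(X\) to a path on \(P\) using the triangle inequality, then use the minimality of \(\Gamma\) to get \(\ell(\Gamma)=\OPT(P)\le\OPT(X)\). The only difference is that you spell out two points the paper leaves implicit, namely that the replaced subpaths are edge-disjoint and that repeated copies are handled by tie-breaking.
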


\begin{proof}
Since \(P\subseteq X\), an optimal ordering of all points in \(X\) induces, by
shortcutting, an ordering of the points of \(P\) of no larger cost. Since
\(\Gamma\) is a shortest Hamiltonian path through \(P\), we have
$
    \ell(\Gamma)=\OPT(P)\le \OPT(X).
$
\end{proof}

It remains to control the distance of the current phase from the prefix curve.
This is where the random-order assumption is used.

\begin{lemma}\label{lem:projection-distance}
Fix a phase with prefix $P=\{x_1,\ldots,x_M\}$ and current block $F=\{x_{M+1},\ldots,x_{M+L}\},$ where \(L\le M\). Let \(\Gamma\) be the shortest Hamiltonian path through
\(P\). For every \(A>0\), there is a constant \(C_A>0\) such that
\[
    \Pr\left[
        \sum_{x\in F}d(x,\Gamma)
        >
        C_A\log n\cdot \OPT(X)
    \right]
    \le
    2n^{-(A+2)}.
\]
\end{lemma}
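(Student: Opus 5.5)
Following the overview, we condition on the set $H=P\cup F$ of the first $M+L$ arrivals. By Proposition~\ref{prop:random-order-sampling}, applied with window $W=\{M+1,\ldots,M+L\}$, the prefix $P$ is then a uniformly random $M$-subset of $H$, and $F=H\setminus P$. Fix a shortest Hamiltonian path $T$ through $H$, chosen as a function of $H$ alone. Shortcutting an optimal path for $X$ gives $\ell(T)=\OPT(H)\le\OPT(X)$. Walk along $T$ and label each vertex $1$ if it lies in $P$ and $0$ if it lies in $F$. By Definition~\ref{def:gaps} and Lemma~\ref{lem:exchangeable-gaps}, with the order along $T$ replacing the sorted order, the maximal runs of zeros are exactly the gaps $g_0,\ldots,g_M$. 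Their vector is uniform over weak compositions of $L$ into $M+1$ parts, and exchangeable.

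The first step is a high-probability bound on the longest run. For a fixed set of $k$ consecutive positions along $T$, the probability that all of them belong to $F$ is
\[
\binom{M+L-k}{M}\Big/\binom{M+L}{M}\le \left(\frac{L}{M+L}\right)^k\le 2^{-k},
\]
where the last inequality uses $L\le M$. There are at most $n$ starting positions, so a union bound shows that with probability at least $1-n^{-(A+2)}$ every run has length at most $k_A:=(A+3)\log_2 n$. Lemma~\ref{lem:gap-sum-concentration} with $q=1$ would also give this bound, but the direct computation is cleaner.

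The second step is a deterministic charging argument on this event. Take a point $x\in F$ lying in a run that is bounded on at least one side by a prefix point $p\in P$. Since $p\in\Gamma$, the triangle inequality along $T$ gives $d(x,\Gamma)\le\|x-p\|\le$ the length of the subpath of $T$ from $x$ to the nearer bounding prefix point. Every edge of that subpath lies inside the run's segment, meaning the run's internal edges together with its two bounding edges. So each edge of $T$ is charged by at most $k_A$ points of $F$, namely the points of the at most one run whose segment contains the edge. Summing,
\[
\sum_{x\in F}d(x,\Gamma)\le k_A\,\ell(T)\le (A+3)\log_2 n\cdot\OPT(X).
\]

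The main obstacle is the degenerate case: a run bounded by a prefix point on only one side (a boundary gap $g_0$ or $g_M$) is still handled by the argument above, but a run with no bounding prefix point at all cannot be, and this happens only if $P=\emptyset$. Since $M\ge m_0\ge1$, the degenerate case never occurs. The remaining care is that $T$ must be chosen depending on $H$ only, so that conditioning on $H$ leaves the labels uniformly random. The second factor in the bound $2n^{-(A+2)}$ leaves room to absorb ties or measure-zero choices in $T$, or to run a union bound over both boundary gaps separately. Removing the conditioning on $H$ then gives the lemma with $C_A=A+3$ (up to the change of log base).
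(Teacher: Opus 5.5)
Your proposal is correct and follows essentially the same route as the paper: condition on $P\cup F$, treat the split as a uniformly random labeling along an optimal tour of $P\cup F$, bound the longest run of $F$-points by $O(\log n)$ via $\prod_{i<t}\frac{L-i}{M+L-i}\le 2^{-t}$, and charge the distances to edges of that tour. The only difference is cosmetic. The paper uses a shortest Hamiltonian cycle, of length at most $2\,\OPT(X)$ and compensated by the nearer-endpoint $\ell/2$ bound, so that every run is flanked by prefix points. You instead use a Hamiltonian path of length at most $\OPT(X)$ and handle the boundary runs directly, which is valid since $P\neq\emptyset$.
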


\begin{proof}

Since every point of \(P\) lies on \(\Gamma\), we have
$ d(x,\Gamma)\le d(x,P)$ 
for every $x\in F$. Thus it suffices to bound
$
    \sum_{x\in F}d(x,P).
$ Because the input \(X\) is a multiset, we regard identical points arriving at
different times as distinct labeled occurrences. Condition on the collection of
occurrences
$
    Z=P\cup F=\{x_1,\ldots,x_{M+L}\},
$
where the labels distinguish coincident copies. Coincident points simply have
distance zero and do not affect any of the metric bounds below. Since the
arrival order is uniformly random, conditioned on \(Z\), the prefix \(P\) is a
uniformly random \(M\)-subset of the \(M+L\) labeled points in \(Z\), and
\(F\) is the complementary \(L\)-subset. Equivalently, the points in
\(Z\) are colored red and blue uniformly at random, with exactly \(M\) red
points and \(L\) blue points; the red points are \(P\), and the
blue points are \(F\).

Let \(T\) be a shortest Hamiltonian cycle through the points of \(Z\), and let
\(L_T=\ell(T)\). Since \(Z\subseteq X\), an optimal Hamiltonian path on \(X\)
can be shortcut to a Hamiltonian path on \(Z\) of length at most \(\OPT(X)\).
Closing this path into a cycle adds at most another \(\OPT(X)\). Hence
$
    L_T\le 2\OPT(X).
$

Traverse \(T\) cyclically. The blue points appear in maximal blue runs between
consecutive red points. Consider such a run. Let \(b\) be the number of blue
points in the run, and let \(\ell\) be the length of the corresponding
red-to-red subpath of \(T\). For every blue point \(x\) in the run, the two
path distances from \(x\) to the red endpoints sum to \(\ell\), so the shorter
one is at most \(\ell/2\). Hence the total contribution of this run to
\(\sum_{x\in F}d(x,P)\) is at most $b\ell/2$.

Let \(M_{\mathrm{run}}\) be the maximum length of a blue run along \(T\).
Summing over all blue runs, and using that the corresponding red-to-red
subpaths are edge-disjoint, we obtain
\[
    \sum_{x\in F}d(x,P)
    \le
    \frac{M_{\mathrm{run}}}{2}\sum \ell
    \le
    \frac{M_{\mathrm{run}}}{2}L_T
    \le
    M_{\mathrm{run}}\OPT(X).
\]
It remains to bound \(M_{\mathrm{run}}\). For any fixed cyclic interval of
\(t\) consecutive vertices along \(T\), if \(t>L\), then the probability that
all \(t\) vertices are blue is zero. Otherwise, this probability is
\[
    \frac{\binom{M+L-t}{L-t}}{\binom{M+L}{L}}
    =
    \prod_{i=0}^{t-1}\frac{L-i}{M+L-i}
    \le
    2^{-t},
\]
where the last inequality uses \(L\le M\). There are at most
\(M+L\le 2n\) cyclic intervals of length \(t\). Therefore,
$
    \Pr[M_{\mathrm{run}}\ge t]\le 2n\cdot 2^{-t}.
$
Choosing
$
    t=\left\lceil (A+4)\log_2 n\right\rceil
$
gives
$
    \Pr[M_{\mathrm{run}}\ge t]\le 2n^{-(A+2)}.
$
On the complementary event,
\[
    \sum_{x\in F}d(x,\Gamma)
    \le
    \sum_{x\in F}d(x,P)
    \le
    M_{\mathrm{run}}\OPT(X)
    \le
    C_A\log n\cdot \OPT(X),
\]
for a suitable constant \(C_A\). This proves the lemma.
\end{proof}

We now bound the cost incurred in one phase. At the beginning of the phase,
the curve \(\Gamma\) is fixed, since it depends only on the prefix \(P\).
Conditional on \(P\) and on the set \(F\), the points of \(F\) arrive in
uniformly random order. Thus the projected keys form a one-dimensional
random-order online sorting instance of size \(L\).

The projected keys lie in \([0,\ell(\Gamma)]\). If \(\ell(\Gamma)>0\), we
apply the one-dimensional sorting algorithm to the rescaled keys
\(\theta_\Gamma(x)/\ell(\Gamma)\); scaling back, the projected cost is
multiplied by \(\ell(\Gamma)\). If \(\ell(\Gamma)=0\), all projected keys are
identical and the projected cost is zero.

Applying Theorem~\ref{thm:sorting-parameterized} with ambient parameter \(N=n\)
and failure exponent \(A+2\), the projected cost of the ordering produced in
the phase is at most $O_A(\log^2 n)\cdot \ell(\Gamma)$ with probability at least \(1-n^{-(A+2)}\).
 Let \(\sigma\) be this ordering.
By Lemma~\ref{lem:projection}, the cost of the same ordering in the original
metric is at most
\[
    O_A(\log^2 n)\cdot \ell(\Gamma)
    +
    2\sum_{x\in F}d(x,\Gamma).
\]
Combining Proposition~\ref{prop:prefix-curve-length} and
Lemma~\ref{lem:projection-distance}, we obtain that, with probability at
least \(1-O(n^{-(A+2)})\), the contribution of the phase is at most
\[
    O_A(\log^2 n)\cdot \OPT(X)
    +
    O_A(\log n)\cdot \OPT(X)
    =
    O_A(\log^2 n)\cdot \OPT(X).
\]

We conclude with the main theorem for the multidimensional problem.

\begin{theorem}\label{thm:random-order-tsp}
There exists an absolute constant \(C>0\) such that, for every adversarially
chosen multiset \(X\subseteq[0,1]^d\) of size \(n\),
\[
    \Pr\!\left[
        \cost(A)\le C\log^3 n\cdot \OPT(X)
    \right]
    \ge 1-n^{-4},
\]
where \(A\) is the array returned by Algorithm~\ref{alg:tsp} when the points of
\(X\) are revealed in uniformly random order.
\end{theorem}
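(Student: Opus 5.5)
The proof decomposes the cost of the final array into the contributions of the algorithm's blocks and the edges joining them, and then union-bounds over the phases. The array $A$ consists of: the initial block of $m_0$ cells, the blocks $A[M+1],\ldots,A[M+L]$ of the doubling phases, and possibly a final leftover block of at most $m_0$ cells. Since $M$ starts at $m_0$ and at least doubles in every phase except possibly the last one, the number of phases is $O(\log n)$. The trivial case $n\le 2m_0$ is handled by the bound $(n-1)\OPT(X)=O(\log^2 n)\OPT(X)$, because every edge has length at most $\OPT(X)$ (every pair of points is joined by a subpath of an optimal Hamiltonian path).

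We bound the pieces as follows.
\begin{itemize}
\item \emph{Arbitrarily placed blocks.} The initial block and the leftover block each contain at most $m_0$ points. By the same edge-length argument, each contributes at most $m_0\,\OPT(X)=O(\log^2 n)\OPT(X)$.
\item \emph{Edges between consecutive blocks.} There are $O(\log n)$ such edges, each of length at most $\OPT(X)$, so their total is $O(\log n)\OPT(X)$.
\item \emph{Within a phase block.} This uses the per-phase bound derived just before the statement. Combining Theorem~\ref{thm:sorting-parameterized} (with $N=n$ and exponent $A+2$), Lemma~\ref{lem:projection}, Proposition~\ref{prop:prefix-curve-length}, and Lemma~\ref{lem:projection-distance}, the phase's internal cost is $O_A(\log^2 n)\OPT(X)$ except with probability $O(n^{-(A+2)})$.
\end{itemize}

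Two conditions need checking before Theorem~\ref{thm:sorting-parameterized} can be applied in a phase.
\begin{itemize}
\item \emph{Block size.} We need $L\ge D_A\log^2 n$. This holds because the loop runs only while $n-M>m_0$, and $M\ge m_0$, so $L=\min\{M,n-M\}\ge m_0=C_0\log^2 n$. Taking $C_0\ge D_{A+2}$ suffices.
\item \emph{Random order.} The projected instance must genuinely be a random-order instance. Conditioned on the set $\{x_1,\ldots,x_M\}$ and the set $F$, the curve $\Gamma$ is fixed, and the points of $F$ arrive in a uniformly random order. The keys $z_t$ are a deterministic function of $\Gamma$, so the sorting instance is random-order. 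Hence the conditional failure probability is at most $n^{-(A+2)}$, and so is the unconditional one.
\end{itemize}
I expect this conditioning step to be the only delicate point. A further subtlety is that the sorting algorithm's guarantee is stated relative to $\OPT$ of the key multiset $\{z_t\}$. This is at most $1$, so after rescaling by $\ell(\Gamma)$ the projected cost is $O(\log^2 n)\,\ell(\Gamma)$, as used.

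Finally, take $A=4$ and union-bound over the $O(\log n)$ phases. Each phase fails with probability $O(n^{-6})$, so the total failure probability is $O(n^{-6}\log n)\le n^{-4}$ for large $n$; small $n$ are absorbed into $C$. On the complementary event, the phase contributions sum to $O(\log n)\cdot O(\log^2 n)\OPT(X)=O(\log^3 n)\OPT(X)$. Adding the $O(\log^2 n)\OPT(X)$ from the arbitrary blocks and the $O(\log n)\OPT(X)$ from the boundary edges gives $\cost(A)\le C\log^3 n\cdot\OPT(X)$, which proves the theorem.
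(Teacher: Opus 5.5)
Your proposal is correct and follows essentially the same argument as the paper. It uses the same decomposition: phase blocks bounded via Theorem~\ref{thm:sorting-parameterized}, Lemma~\ref{lem:projection}, Proposition~\ref{prop:prefix-curve-length} and Lemma~\ref{lem:projection-distance}; the arbitrarily placed initial and leftover blocks; and the $O(\log n)$ boundary edges, followed by a union bound over the $O(\log n)$ phases with $A=4$. Your explicit checks that $L\ge m_0\ge D_{A+2}\log^2 n$, and that conditioning on the prefix and on the set $F$ makes the projected keys a genuine random-order instance, are handled by the paper only in the surrounding text, and they are correct.
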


\begin{proof}
Fix a sufficiently large constant \(A\). There are \(O(\log n)\) phases,
because the prefix size doubles in every non-final phase. By the phase bound
above and a union bound, with probability at least \(1-n^{-A}\), every phase
contributes at most
$
    O_A(\log^2 n)\cdot \OPT(X).
$
Summing over all phases gives total internal phase cost
$
    O_A(\log^3 n)\cdot \OPT(X).
$

The initial block and the final leftover block, if any, contain
\(O(\log^2 n)\) points in total and are placed arbitrarily. Since the distance
between any two input points is at most \(\OPT(X)\), their total contribution
is at most
$
    O(\log^2 n)\cdot \OPT(X),
$
which is dominated by the phase cost.

It remains to account for the boundaries between consecutive blocks in the
array. There are \(O(\log n)\) such boundaries. The cost of each boundary edge
is at most \(\OPT(X)\), since the distance between any two input points is at
most the length of an optimal Hamiltonian path through all points. Hence the
total boundary cost is
$
    O(\log n)\cdot \OPT(X),
$
which is again dominated by the phase cost. Therefore the total cost is
\[
    O_A(\log^3 n)\cdot \OPT(X)
\]
with probability at least \(1-n^{-A}\). Taking \(A=4\) proves Theorem~\ref{thm:random-order-tsp}.
\end{proof}

\section{Conclusion}

We initiated the study of Online Sorting in the random-order model, where the input multiset is chosen adversarially but revealed in uniformly random order. Our results show that the distributional information used by previous stochastic algorithms can instead be replaced by information extracted online from random prefixes of the input itself. We also extend the guarantees known for stochastic Online TSP beyond the uniform distribution, obtaining an \(O(\log^3 n)\)-competitive algorithm with high probability.

A natural next direction is to study more models that interpolate further between the i.i.d.\ stochastic setting and fully adversarial inputs. One particularly intriguing candidate is a \emph{prophet} version of Online Sorting. In this model, at each time step \(t\), the arriving element \(x_t\) is drawn independently from a known distribution \(D_t\), where the distributions may vary with time. The algorithm knows the sequence \(D_1,\ldots,D_n\), but observes each realized value only upon arrival and must place it irrevocably.

This model poses a qualitatively different challenge from both the stochastic and random-order settings studied so far. In the i.i.d.\ model, every arrival is governed by the same distribution, while in the random-order model, the random permutation provides a strong form of symmetry across time. The prophet model loses this time uniformity: early and late arrivals may follow entirely different distributions, so a prefix or time window need not be representative of the instance as a whole. It therefore becomes unclear whether the sampling-based ideas underlying the known stochastic and random-order algorithms can still be made to work. Determining whether polylogarithmic competitive ratios remain achievable in this setting is an intriguing open problem.

\newpage
\bibliographystyle{plainnat}
\bibliography{refs}

\appendix
\section{Random Order Online Sorting Deferred Proofs}

\subsection{Proof of Lemma~\ref{lem:gap-sum-concentration}} \label{App:ConcentrationProof}

\begin{proof}
If \(q=M+1\), then \(Q=\{0,\ldots,M\}\) and \[ G_Q=\sum_{r=0}^M g_r=L=\mu_Q \] deterministically, so the claim is immediate. Hence assume \(q\le M\).

By exchangeability, it suffices to consider $Q=\{0,1,\ldots,q-1\}$. Then \(G_Q\) is the number of zeros appearing before the \(q\)-th one in a uniformly random binary string containing \(M\) ones and \(L\) zeros.

For an integer \(k\), let \(Z_k\) denote the number of zeros among the first
\(k\) positions of this string. Then \(Z_k\) is hypergeometrically distributed
with mean
\[
    \lambda_k=\frac{kL}{M+L}.
\]
We use the standard Chernoff--Bernstein bounds for hypergeometric random
variables:
\[
    \Pr[Z_k\ge\lambda_k+t]
    \le
    \exp\left(
        -\frac{t^2}{2(\lambda_k+t/3)}
    \right)
\]
and
\[
    \Pr[Z_k\le\lambda_k-t]
    \le
    \exp\left(
        -\frac{t^2}{2\lambda_k}
    \right).
\]

We first consider the upper tail. Set $s=\lceil\mu_Q+a\rceil$. If \(s>L\), the event \(G_Q\ge s\) is impossible. Otherwise, setting $k=q+s-1$, we have $\{G_Q\ge s\}=\{Z_k\ge s\}$.

Writing \(d=s-\lambda_k\), we obtain
\[
\begin{aligned}
    d
    &=
    \frac{Ms-L(q-1)}{M+L}\\
    &\ge
    \frac{M(\mu_Q+a)-L(q-1)}{M+L}\\
    &=
    \frac{Ma+L-\mu_Q}{M+L}\\
    &\ge
    \frac{a}{1+C_0},
\end{aligned}
\]
where we used \(\mu_Q\le L\). Moreover, $\lambda_k+d=s\le\mu_Q+a+1$.

The hypergeometric upper-tail bound therefore gives
\[
    \Pr[G_Q\ge\mu_Q+a]
    \le
    \exp\left(
        -c\frac{a^2}{\mu_Q+a+1}
    \right)
    \le
    \exp\left(
        -c'\min\left\{
            \frac{a^2}{\mu_Q+1},
            a
        \right\}
    \right),
\]
for constants \(c,c'>0\) depending only on \(C_0\).

For the lower tail, we may assume \(a\le\mu_Q\), since otherwise the event is empty. Set $s=\lfloor\mu_Q-a\rfloor,$ and $k=q+s.$ Then $\{G_Q\le s\}=\{Z_k\le s\}$. Writing \(d=\lambda_k-s\), we have
\[
\begin{aligned}
    d
    &=
    \frac{Lq-Ms}{M+L}\\
    &\ge
    \frac{Lq-M(\mu_Q-a)}{M+L}\\
    &=
    \frac{\mu_Q+Ma}{M+L}\\
    &\ge
    \frac{a}{1+C_0}.
\end{aligned}
\]
Furthermore,
\[
    \lambda_k
    \le
    \frac{qL}{M+L}
    +
    \frac{sL}{M+L}
    \le
    2\mu_Q.
\]
Hence the hypergeometric lower-tail bound gives
\[
    \Pr[G_Q\le\mu_Q-a]
    \le
    \exp\left(
        -c\frac{a^2}{\mu_Q+1}
    \right).
\]
Combining the two tails proves the claim.
\end{proof}

\subsection{Proof of Lemma \ref{lem:regularity-implies-invariant}}
\label{app:regularity-implies-invariant}

\begin{proof}
We prove the following stronger quantitative statement. At every non-final update, immediately before the new phase is created, every bucket of the immediately preceding phase has $\Delta\pm\Gamma$ empty cells, while every bucket of the phase before it is full. Immediately
before the final arrival \(x_n\), all buckets created before
\(\Phi^\downarrow_K\) are full and the unique bucket of
\(\Phi^\downarrow_K\) has exactly one empty cell.

Since
\[
    \Delta=\Theta(\log^{3/2}n)
    \qquad\text{and}\qquad
    \Gamma\le\frac{\Delta}{100},
\]
this statement implies the main invariant after choosing
\(c_a,c_m,c_d\) sufficiently large.

We proceed inductively over the phase updates. Throughout the proof, the
notation \(\pm\Gamma\) absorbs both the deviations guaranteed by
\(\mathcal E_{\mathrm{reg}}\) and all additive \(O(1)\) errors arising from
the balanced empirical rank cutoffs, integer apportionment of bucket sizes,
and rounding of the time grid.

We first record two observations that will be used repeatedly.

First, the average effective capacity of a newly created phase is determined
by the time grid. Immediately before ascent phase
\(\Phi^\uparrow_j\) is created, the number of cells already allocated is
\(\tau_{K-j+1}\), while  $t^\uparrow_j=(1-\eta)\tau_{K-j+1}.$
Thus the total number of empty cells in previously allocated phases is $\eta\tau_{K-j+1}.$
Since \(\Phi^\uparrow_j\) has \(4^j\) intervals, the average charge per new
interval is
\[
    \frac{\eta\tau_{K-j+1}}{4^j}
    =
    \frac{\Delta}{4}.
\]
Since the fresh-cell budget contributes \(b\) cells per interval on average,
the balanced extension rule gives effective capacity $b+\Delta/4\pm O(1)$
for every ascent interval.

Similarly, immediately before the middle phase is created, the total number of
empty cells is \(\eta\tau_0\). Since the middle phase has \(4^K\) intervals,
the average charge per middle interval is
\[
    \frac{\eta\tau_0}{4^K}
    =
    \Delta.
\]
Hence every middle interval has effective capacity $b+\Delta\pm O(1)$.

Finally, immediately before descent phase \(\Phi^\downarrow_j\) is created,
the number of cells already allocated is \(n-\tau_j\), whereas $t^\downarrow_j
    =
    n-(1+4\eta)\tau_j$.
Thus the previously allocated phases contain $4\eta\tau_j$
empty cells. Since \(\Phi^\downarrow_j\) has \(4^{K-j}\) intervals, the
average charge per new interval is
\[
    \frac{4\eta\tau_j}{4^{K-j}}
    =
    4\Delta.
\]
Hence every non-final descent interval has effective capacity $b+4\Delta\pm O(1)$.

Second, suppose that, at some update, a group of residual cells from the
preceding phase is assigned eligibility interval \(J\). By the induction
hypothesis, this group contains at most $\Delta+\Gamma=o(\tau_K)$ cells. On \(\mathcal E_{\mathrm{reg}}\), the interval \(J\) receives
\(\Omega(\tau_K)\) arrivals during the relevant one-step window. Moreover,
all still earlier phases are full by induction, and the algorithm always
places an arriving element into the earliest created bucket containing an
eligible empty cell. Consequently, all residual cells eligible for \(J\) are
filled before any later-created ordinary cells eligible for \(J\) are used.

During the ascent and the middle phase, the required load estimate follows
from \textnormal{(R1)}. During the descent it follows from
\textnormal{(R2)}, and in the final period from \textnormal{(R3)}. Once these
residual cells have been filled, the number of empty cells in the new bucket
can be computed by subtracting the total load of its interval from its total
effective capacity.

We now verify the induction phase by phase.

\paragraph{Initial phase.}

The initial phase consists of one bucket of size \(\tau_K\). The first ascent
update occurs at time $t^\uparrow_1=(1-\eta)\tau_K.$
Hence, immediately before \(\Phi^\uparrow_1\) is created, the initial bucket
has exactly $\tau_K-(1-\eta)\tau_K
    =
    \Delta$
empty cells. This establishes the base case.

\paragraph{Ascent phases.}

Fix \(j\in\{1,\ldots,K\}\), and suppose inductively that immediately before
\(\Phi^\uparrow_j\) is created, every bucket of
\(\Phi^\uparrow_{j-1}\) has $\Delta\pm\Gamma$
empty cells, while every bucket of \(\Phi^\uparrow_{j-2}\) is full whenever
\(j\ge2\).

The established bound on the empty cells of
\(\Phi^\uparrow_{j-1}\) is precisely the property used in the proof of
\cref{lem:ascent-containment}. The same argument therefore shows that every
interval of \(\Phi^\uparrow_{j-1}\) contains an interval of
\(\Phi^\uparrow_j\), so the charging rule is well-defined.

Every new interval receives charge from at most one old interval. Hence $c(J)\le\Delta+\Gamma$.
Since every new interval has effective capacity $ b+\Delta/4\pm O(1)$,
its fresh bucket has size at least
\[
    b+\frac{\Delta}{4}
    -(\Delta+\Gamma)
    -O(1)
    =
    b-\frac{3\Delta}{4}-\Gamma-O(1)
    >0.
\]
Thus the balanced extension is feasible.

We now consider the one-step window following the creation of
\(\Phi^\uparrow_j\). If \(j<K\), then
\[
    t^\uparrow_{j+1}-t^\uparrow_j
    =
    (1-\eta)
    \bigl(
        \tau_{K-j}-\tau_{K-j+1}
    \bigr).
\]
Since \(\Phi^\uparrow_j\) has \(4^j\) balanced empirical intervals, the
expected one-step load of every interval is $(1-\eta)b\pm O(1)$.
For \(j=K\), the next update is the middle update, and the same calculation
gives expected one-step load $(1-\eta)b\pm O(1)$.
Therefore, by \textnormal{(R1)}, every interval of
\(\Phi^\uparrow_j\) receives $(1-\eta)b\pm\Gamma$
arrivals during this window.

Every residual group created at this update is assigned to some interval
\(J\in\mathcal I(\Phi^\uparrow_j)\). Such a group contains at most $\Delta+\Gamma=o(\tau_K)$
cells, whereas the interval \(J\) receives $(1-\eta)b\pm\Gamma
    =
    \Omega(\tau_K)$
arrivals before the next update. By the earliest-created-bucket placement
rule, every such residual group is therefore filled during this window.
Consequently, every bucket of \(\Phi^\uparrow_{j-1}\) is full by the next
update.

We may therefore compute the empty cells remaining in each new ascent bucket
by subtracting its interval load from its effective capacity. Using $\eta b=3\Delta/4$,
we obtain
\[
\begin{aligned}
    b+\frac{\Delta}{4}
    -(1-\eta)b
    \pm\Gamma
    &=
    \Delta\pm\Gamma.
\end{aligned}
\]
Thus, at the next update, every bucket of \(\Phi^\uparrow_j\) has $\Delta\pm\Gamma$
empty cells, while every bucket of \(\Phi^\uparrow_{j-1}\) is full. This
completes the ascent induction.

\paragraph{Middle phase.}

Immediately before the middle phase is created, every bucket of
\(\Phi^\uparrow_K\) has $\Delta\pm\Gamma$
empty cells, while every bucket of \(\Phi^\uparrow_{K-1}\) is full.

The charging map is the identity, so each middle interval receives at most $\Delta+\Gamma$ units of charge. Since its effective capacity is $b+\Delta\pm O(1)$,
every fresh middle bucket has size at least $b-\Gamma-O(1)>0$.
Thus the balanced extension is feasible.

Using $t^\downarrow_1
    =
    n-(1+4\eta)\tau_1$
and $    t^{\mathrm{mid}}
    =
    (1-\eta)\tau_0,$
together with
$
    n=2\tau_0
    \text{ and }
    \tau_0=4\tau_1,
$
we obtain
\[
    t^\downarrow_1-t^{\mathrm{mid}}
    =
    \tau_0-\tau_1.
\]
Since the middle phase has \(4^K\) intervals, the expected one-step load of
every middle interval is $b\pm O(1)$.
Hence, by \textnormal{(R1)}, every middle interval receives $b\pm\Gamma$
arrivals before \(\Phi^\downarrow_1\) is created.

The residual cells inherited from \(\Phi^\uparrow_K\) retain the same
eligibility intervals as the corresponding middle intervals. Each such
residual group contains at most $\Delta+\Gamma=o(\tau_K)$
cells, while its eligibility interval receives
$ b\pm\Gamma
    =
    \Omega(\tau_K)$
arrivals during this window. Hence every residual group is filled, and
therefore every bucket of \(\Phi^\uparrow_K\) is full before
\(\Phi^\downarrow_1\) is created.

Subtracting the one-step load from the effective capacity of each middle
interval gives $b+\Delta-b\pm\Gamma
    =
    \Delta\pm\Gamma$.
Thus, immediately before \(\Phi^\downarrow_1\) is created, every middle
bucket has $\Delta\pm\Gamma$
empty cells, while every bucket of \(\Phi^\uparrow_K\) is full.

\paragraph{Descent phases before the final one.}

Fix \(j<K\), and suppose inductively that immediately before
\(\Phi^\downarrow_j\) is created, every bucket of the preceding phase has $\Delta\pm\Gamma$
empty cells, while every bucket of the phase before that is full.

Each interval of \(\Phi^\downarrow_j\) is the union of four intervals of the
preceding phase. Hence its total charge is at most $4\Delta+4\Gamma.$
Since its effective capacity is $b+4\Delta\pm O(1),$
every fresh bucket has size at least $b-4\Gamma-O(1)>0.$
Thus the balanced extension is feasible.

The four residual groups charged to a merged interval retain their original
child eligibility intervals. By \textnormal{(R2)}, every such child interval
receives $\Omega(\tau_K)$
arrivals during the following one-step window, whereas the corresponding
residual group contains at most $\Delta+\Gamma=o(\tau_K)$
cells. Since the algorithm prioritizes earlier-created eligible cells, every
residual group is filled during this window. Consequently, every bucket of the
phase preceding \(\Phi^\downarrow_j\) is full by the next update.

The next descent window has length
\[
    t^\downarrow_{j+1}-t^\downarrow_j
    =
    (1+4\eta)(\tau_j-\tau_{j+1}).
\]
Since \(\Phi^\downarrow_j\) has \(4^{K-j}\) intervals, the expected one-step
load of every interval is $(1+4\eta)b\pm O(1)$.
By \textnormal{(R1)}, the actual load is therefore $(1+4\eta)b\pm\Gamma.$
Using $4\eta b=3\Delta$,
the number of empty cells remaining in each new descent bucket at the next
update is $\begin{aligned}
    b+4\Delta
    -(1+4\eta)b
    \pm\Gamma
    &=
    \Delta\pm\Gamma.
\end{aligned}$

Thus, at the next update, every bucket of \(\Phi^\downarrow_j\) has $\Delta\pm\Gamma$
empty cells, while every bucket of the preceding phase is full. This completes
the induction over all non-final descent phases.

\paragraph{Final descent phase.}

Immediately before \(\Phi^\downarrow_K\) is created, every bucket of
\(\Phi^\downarrow_{K-1}\) has $\Delta\pm\Gamma$
empty cells, while every bucket of \(\Phi^\downarrow_{K-2}\) is full. The
algorithm has already placed
\[
    t^\downarrow_K
    =
    n-(1+4\eta)\tau_K
\]
elements. At this point, all fresh cells except the \(\tau_K\) cells reserved
for \(\Phi^\downarrow_K\) have already been allocated. Hence the number of
empty cells in previously allocated buckets is exactly
\[
\begin{aligned}
    n-t^\downarrow_K-\tau_K
    &=
    (1+4\eta)\tau_K-\tau_K
    =
    4\Delta.
\end{aligned}
\]
Since all phases before \(\Phi^\downarrow_{K-1}\) are full, these are exactly
the residual cells inherited from the four buckets of
\(\Phi^\downarrow_{K-1}\).

All \(4\Delta\) cells are charged to the unique interval \([0,1]\) of the
final descent phase while retaining their original child eligibility
intervals. The unique fresh bucket of \(\Phi^\downarrow_K\) has size
\(\tau_K\). Therefore the total effective capacity after the final descent
update is $\tau_K+4\Delta
    =
    (1+4\eta)\tau_K,$
which is exactly the number of arrivals remaining after time
\(t^\downarrow_K\).

There are four residual groups, corresponding to the four intervals of
\(\Phi^\downarrow_{K-1}\). By \textnormal{(R3)}, every such child interval
receives $\Omega(\tau_K)$
arrivals before time \(n\), while its residual group contains at most $\Delta+\Gamma=o(\tau_K)$
cells. Hence every residual group is filled before the final arrival, and
therefore all buckets created before \(\Phi^\downarrow_K\) are full before
\(x_n\) arrives.

Immediately before \(x_n\) is placed, exactly one cell of the entire array is
empty. Since all earlier buckets are full, this cell belongs to the unique
bucket of \(\Phi^\downarrow_K\). Thus this bucket has exactly one empty cell.

The strengthened quantitative statement follows, and therefore $\mathcal E_{\mathrm{reg}}
    \subseteq
    \mathcal G.$
\end{proof}

\end{document}